\documentclass[10pt,journal,twocolumn]{IEEEtran}
\usepackage{amssymb,amsmath}
\newtheorem{Theorem}{Theorem}

\newtheorem{Lemma}{Lemma}

\usepackage{cite}
\usepackage{graphics}
\usepackage{graphicx}

\begin{document}
\title{Distributed Linear Convolutional Space-Time Coding for Two-Relay Full-Duplex Asynchronous Cooperative Networks\thanks{Y. Liu and H. Zhang are with the State Key Laboratory of Integrated Service Network, Xidian University, Xi'an, 710071, China (e-mail: \{yliu, hlzhang\}@xidian.edu.cn). Their research was supported in part by the National Natural Science Foundation of China (61301169), the Fundamental Research Funds for the Central Universities (72125377), the Important National Science \& Technology Specific Projects (2012ZX03003012-003), the 111 Project (B08038), China.
X.-G. Xia is with the Department of Electrical and Computer Engineering, University of Delaware, Newark, DE 19716, USA (e-mail: xxia@ee.udel.edu). His research was supported in part by the National Science Foundation (NSF) under Grant CCF-0964500.}
}

\author{Yi Liu, Xiang-Gen Xia, and Hailin Zhang}

\maketitle

\begin{abstract}
In this paper, a two-relay full-duplex asynchronous cooperative network with the amplify-and-forward (AF) protocol is considered. We propose two distributed space-time coding schemes for the cases with and without cross-talks, respectively. In the first case, each relay can receive the signal sent by the other through the cross-talk link. We first study the feasibility of cross-talk cancellation in this network and show that the cross-talk interference cannot be removed well. For this reason, we design space-time codes by utilizing the cross-talk signals instead of removing them. In the other case, the self-coding is realized individually through the loop channel at each relay node and the signals from the two relay nodes form a space-time code. The achievable cooperative diversity of both
cases is investigated and the conditions to achieve full cooperative diversity are presented. Simulation results verify the theoretical analysis.

\end{abstract}

\begin{IEEEkeywords}
distributed space-time code, full-duplex, cooperative communications, asynchronous cooperative diversity.
\end{IEEEkeywords}

\IEEEpeerreviewmaketitle

\section{Introduction}

In cooperative wireless communication networks,
multiple nodes work together to form a virtual
multi-input and multi-output (MIMO) system.
Using cooperation, it is possible to exploit the
spatial diversity similar to a MIMO system,
 see for example, \cite{Politis}--\cite{1321222}.
According to the time slots of receiving and transmitting, the working modes at relay nodes can be categorized into full-duplex (FD) and half-duplex (HD) modes.
With  the HD mode a relay receives and transmits signals on orthogonal
(in time or frequency) channels, while
with the FD mode it uses only one channel
\cite{5161790,5089955,5470111,5961159,4917875,6069620}.
Thus, an FD cooperative protocol may achieve a higher bandwidth efficiency than
an HD cooperative protocol \cite{1499041}. However, the FD mode introduces loop (self)
 interference due to the signal leakage between the same relay's output and input and sometimes cross-talk interference among different relays' output and input.
%
To deal with this problem and analyze the feasibility of the capacity
gain with the FD mode, recent efforts have been made
in \cite{5161790, 5089955, 5470111, 5961159, 4917875,6069620}
where various loop interference cancellation schemes have been proposed
for networks with one relay, and in \cite{6151271} with two or more relays where
in addition to self-loop interference, cross-talk interference between the relays
may occur. In terms of cross-talk interference, the case with multiple single-input single-output (SISO) relays is similar to the case with one multiple-input multiple-output relay. However, for the case with one MIMO relay, the relay can know the cross-talk exactly since it is sent by itself and also the synchronization is not a problem. On the other hand, for the case with multiple SISO relays, one relay does not know the cross-talk interference exactly since it is sent by other relays and furthermore, the signals may not be synchronized among all the relays.

In our previous work \cite{Yiliu}, we proposed a different way to deal with the
self-loop interference for a cooperative network with one relay node, where
not all the loop interference is cancelled but instead some of them are utilized as the coding
(space-time coding) to achieve the spatial diversity. Since there is only one relay used,
the cross-talk interference is not an issue. In this paper, we consider a cooperative
network with two amplify-and-forward (AF) relays where cross-talk may occur. To deal with both loop interference
and cross-talk interference, we propose a partial
 distributed linear convolutional space-time coding (partial DLC-STC) scheme
where the cross-talk interference is utilized as a part of the
partial DLC-STC as the self-coding. Note that here we adopt the time domain approach but not the frequency domain, i.e., orthogonal frequency division multiplexing (OFDM), approach. This is because the signal model in this case  may induce infinite length (or the same as any block length as we shall see later) impulse responses in the equivalent channel and a long cyclic prefix would be needed for the OFDM approach. When there is no cross-talk interference, we also propose
a DLC-STC scheme where some of the loop interference
is used for the self-coding that is similar to but more general than Scheme two
proposed in \cite{Yiliu}.
In both cases, we illustrate that the proposed DLC-STC schemes can achieve full asynchronous cooperative
 diversity of two.

This paper is organized as follows. In Section \ref{sec2},
we formulate the system and signal models for two-relay two-hop FD cooperative networks, where we show that the cross-talk interference at relays cannot be removed well even when the relays know all the accurate channel state information.
In Section \ref{sec3}, we present the construction method of DLC-STC for FD
 asynchronous cooperative communications with cross-talks. In Section \ref{sec5-1}, we show the construction method when there is no cross-talk link between the relays. In Section \ref{sec6}, we present some simulation results to evaluate
the performances of the proposed schemes. Finally,
in Section \ref{sec7}, we conclude this paper.

\section{System model and motivation}\label{sec2}
\begin{figure}[htbp]
\centering
\includegraphics[scale=0.8]{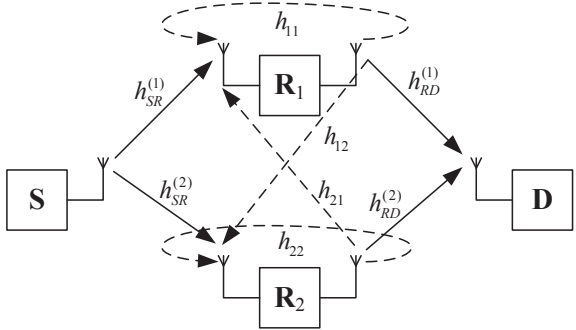}
\caption{\label{fig1}Two-relay two-hop cooperative network with potential loop and cross-talk interference.}
\end{figure}

Consider a cooperative network shown in Fig.~\ref{fig1}, where there are two single-input single-output (SISO) relay nodes between the source node with one transmit antenna and the destination node with one receive antenna.
The source node communicates with the destination node via the relay nodes and the direct link is assumed too weak to be considered.
Each relay node receives and sends signals with the same frequency band at the same time. So there is a loop link at each relay node as well as cross-talk links between the two relays. The channel from the source to the $k$th relay is $h_{SR}^{(k)}$, from the $k$th relay to
the destination is $h^{(k)}_{RD}$, and the loop channel of the $k$th relay is $h_{kk}$. Since all the relays use the same frequency band, there is a cross-talk
link $h_{jk}$ between the $j$th relay transmitter and the $k$th relay receiver. All of the channels are assumed to be
quasi-static and follow the distribution of $\mathcal{CN}(0,1)$, that is, the channels keep constant during each frame and change between frames. The delay from the source node to Relay $k$ is $\varphi_k$. The two relays are assumed not far away from each other, for which the transmission delay of the cross-talk links is too small to consider as well as the loop links. Thus, the received signal at Relay $k$ at time slot $i$ can be written as
\begin{equation}
r^{(k)}(i)=h_{SR}^{(k)}x(i-\varphi_k)+ h_{kk}t^{(k)}(i)+h_{j k}t^{(j)}(i)+w^{(k)}_R(i), j\neq k \label{equ1-1}
\end{equation}
where $t^{(k)}(i)$ is the signal sent by Relay $k$ and
 $w^{(k)}_R(i)$ is the additive noise with the distribution of $\mathcal{CN}(0,\sigma_R^2)$ at Relay $k$, and the delay $\varphi_k$  is normalized as an integer by the information symbol period $T_s$ since the fractional delay can be absorbed in the channels.
The channel state information $h_{SR}^{(k)}$, $\varphi_k$, and $h_{jk}$ are assumed to be known
at all the relays and the destination.

The second and third terms in the right hand side of equation (\ref{equ1-1}) are signals from the loop channel and the cross-talk channel, respectively, which are regarded as interference in general.
One obvious idea would be to cancel the loop interference corresponding to the second term and the cross-talk interference corresponding to the third term.
Then, after these interference terms are cancelled, the signal model
would become the same as the existing
HD model and thus
the existing cooperative relay schemes for HD model could be used.
Let us see whether this idea works.
The second term is easy to be removed since each relay knows what is sent by itself.
However, the third term is  from the other relay. To remove this, Relay $k$ can only use the signal
 estimated by itself to reproduce this term (even though all the channel state
information is known at both relays), during which the noise would be
unfortunately propagated and accumulated as we shall see in more details below.

In this paper, the amplify-and-forward (AF) protocol \cite{5089955,5961159,4917875} is considered. Accordingly, the transmission signal at Relay $k$ is
\begin{equation}\label{equ2-10}
t^{(k)}(i)=\beta_{k}\mathcal{L}(r^{(k)}(i-\phi))
\end{equation}
where $\beta_{k}$ is the amplifying factor to control the transmission power to be 1, $\phi$ is the common delay at the relays,
and $\mathcal{L}(\cdot)$ denotes  some simple linear operations, such as
interference subtractions as we shall see later
and zero-forcing (ZF) or minimum mean square error (MMSE) estimator
that will be described below.
From (\ref{equ1-1}), it is clear
 that the condition that the cross-talk interference $t^{(j)}(i)$ can be cancelled is that Relay $k$ can somehow reproduce $t^{(j)}(i)$. This
implies that  the time index $i-\phi \leq i-\varphi_k-1$, for $k=1,2$.
Thus,  the common delay at the transmissions of two relays
 should be controlled such that $\phi\geq \max\{\varphi_1,\varphi_2\}+1$.
Suppose for $i\leq i_0$, $i_0=\phi$, the loop interference and cross-talk interference are removed perfectly in (\ref{equ1-1}). For simplicity, we assume
that the ZF estimation is used and then the estimated signal can be ideally
written as
\begin{equation}\label{equ2-11}
\begin{array}{rcl}
\mathcal{L}(r^{(k)}(i))&=&\frac{1}{h_{SR}^{(k)}}[r^{(k)}(i)-h_{kk}t^{(k)}(i)-h_{j k}t^{(j)}(i)]\\
&=&\hat{x}_{k}(i-\varphi_k)=x(i-\varphi_k)+\frac{w^{(k)}_R(i)}{h_{SR}^{(k)}},i\leq i_0.
\end{array}
\end{equation}
Substituting (\ref{equ2-11}) into (\ref{equ2-10}), we obtain the transmission signal at time slot $i_0+1$ as follows,
\begin{equation}\label{equ2-110}
\begin{array}{rcl}
t^{(k)}(i_0+1)&=&\beta_{k}\hat{x}_k(i_0+1-\phi)\\
&=&\beta_{k}\left[x(i_0+1-\phi)+\frac{w^{(k)}_R(i_0+1-\phi+\varphi_k)}{h_{SR}^{(k)}}\right].
\end{array}
\end{equation}
The received signal at time slot $i_0+1$ is
\begin{equation}\label{equ2-12}
\begin{array}{rcl}
r^{(k)}(i_0+1)=h_{SR}^{(k)}x(i_0+1-\varphi_k)+h_{kk}t^{(k)}(i_0+1)\\+h_{j k}t^{(j)}(i_0+1)+w^{(k)}_R(i_0+1), j\neq k.
\end{array}
\end{equation}
The loop interference can be removed perfectly since it is sent by Relay $k$ itself, but for the cross-talk $t^{(j)}(i_0+1)$, it can only be reproduced and cancelled by using the estimated symbols at Relay $k$, that is, $\hat{x}_{k}(i)$. Thus, we can obtain (\ref{equ2-13}) at the top of the next page.
\newcounter{MYtempeqncnt}
\begin{figure*}[!t]
\normalsize
\setcounter{MYtempeqncnt}{\value{equation}}
\begin{equation}\label{equ2-13}
\begin{array}{rcl}
\hat{x}_{k}(i_0+1-\varphi_k)&=&\frac{1}{h_{SR}}[r^{(k)}(i_0+1)-h_{kk}t^{(k)}(i_0+1)-h_{j k}\beta_{j}\hat{x}_k(i_0-\phi)]\\
&=&x(i_0+1-\varphi_k)+\frac{w^{(k)}_R(i_0+1)}{h_{SR}^{(k)}}+\frac{h_{jk}\beta_{j}[\hat{x}_j(i_0-\phi)-\hat{x}_k(i_0-\phi)]}{h_{SR}^{(k)}}\\
&=&x(i_0+1-\varphi_k)+\frac{w^{(k)}_R(i_0+1)}{h_{SR}^{(k)}}+\frac{h_{j k}\beta_{j}[w^{j}_R(i_0+\varphi_k-\phi)-w^{k}_R(i_0+\varphi_k-\phi)]}{h_{SR}^{(k)}}
\end{array}
\end{equation}
\setcounter{equation}{10}
\begin{equation}\label{equ11-cstc01}
\begin{array}{lll}
r^{(k)}(i)\!\!\!\!&=\sum_{n=0}^{\infty}\eta^n\left[ h_{SR}^{(k)}{x}(i-2n\phi-\varphi_k)+\!\! h_{jk}\beta_jh_{SR}^{(j)}{x}(i-(2n+1)\phi-\varphi_j)\right] \\
&+\sum_{n=0}^{\infty}\eta^n\left[{w}^{(k)}(i-2n\phi)+\!\! h_{jk}\beta_j{w}^{(j)}(i-(2n+1)\phi)\right]
\end{array}
\end{equation}
\begin{equation}\label{equ11-cstc02}
\begin{array}{rcl}
r^{(k)}(i)&=&\sum_{u=0}^{u_1-1}\eta^u h_{SR}^{(k)}{x}(i-2u\phi-\varphi_k)+\sum_{v=0}^{v_1-1} \eta^v h_{jk}\beta_jh_{SR}^{(j)}{x}(i-(2v+1)\phi-\varphi_j) \\
&&+\sum_{u=u_1}^{u_2}\eta^u h_{SR}^{(k)}{x}(i-2u\phi-\varphi_k)+ \sum_{v=v_1}^{v_2} \eta^v h_{jk}\beta_jh_{SR}^{(j)}{x}(i-(2v+1)\phi-\varphi_j) \\
&&+\sum_{u=u_2+1}^{\infty}\eta^u h_{SR}^{(k)}{x}(i-2u\phi-\varphi_k)+ \sum_{v=v_2+1}^{\infty} \eta^v h_{jk}\beta_jh_{SR}^{(j)}{x}(i-(2v+1)\phi-\varphi_j)\\
&&+\sum_{n=0}^{\infty}\eta^n\left[{w}^{(k)}(i-2n\phi)+\!\! h_{jk}\beta_j{w}^{(j)}(i-(2n+1)\phi)\right]
\end{array}
\end{equation}
\setcounter{equation}{\value{MYtempeqncnt}}
\hrulefill
\end{figure*}

The third term in the right hand side of (\ref{equ2-13}) is the noise
propagated from the estimated symbols and will be  also propagated
to  the following transmission signal defined by (\ref{equ2-10}):
\setcounter{equation}{6}
\begin{equation}\label{equ2-14}
\begin{array}{lll}
t^{(k)}(i_0\!+\!1\!+\!\phi\!-\!\varphi_k)\!\!\!\!&=\beta_{k}\hat{x}_k(i_0+1-\varphi_k)\\
&=\beta_{k}\left[x(i_0+1-\varphi_k)+\frac{w^{(k)}_R(i_0+1)}{h_{SR}^{(k)}}\right]\\
&+\frac{\beta_{k}h_{j k}\beta_{j}[w^{j}_R(i_0+\varphi_k-\phi)-w^{k}_R(i_0+\varphi_k-\phi)]}{h_{SR}^{(k)}}.
\end{array}
\end{equation}
Comparing (\ref{equ2-110}) and (\ref{equ2-14}), we notice that the noise is
accumulated in the transmission signal
and  increases along with the time index.

Although  it is hard to achieve a closed form of the accumulated noise,
 to have a qualitative observation, we calculate the average SNR of the transmission signals at the two relays by simulations and the result is shown in Fig.~\ref{fig00}.
\begin{figure}[htbp]
\centering
\includegraphics[scale=0.55]{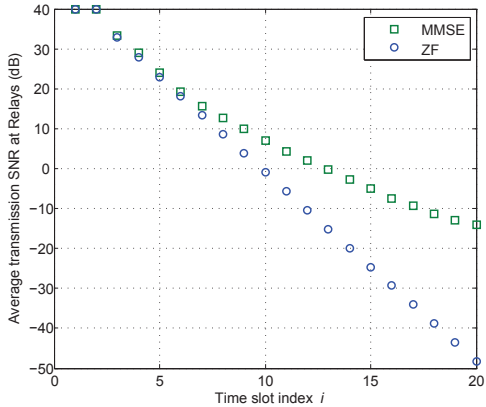}
\caption{\label{fig00}Average transmission SNR versus the time index $i$ of full-duplex two relays network.}
\end{figure}
The parameters used in Fig.~\ref{fig00} are $\varphi_1=\varphi_2=0$ and $\phi=1$.
The signal power is normalized to be 1 and the noise power $\sigma_R^2=-40$dB.
 The length of each block is $20$.
The channel coefficients remain unchanged during one block and
they  change randomly  between different trials.
We notice that with the increase in the time  index,
the transmission SNR decreases from $40$dB to $-49$dB and $-14$dB with the ZF and MMSE estimators, respectively, which means
that the desired signal is completely  buried in
the noise at the end of the block.

From the analysis above, we can see that it is impossible
for the relays to accurately estimate the symbols sent by the source node
when the AF protocol is used so that
the cross-talk interference can be removed. In other words, the cross-talk
interference cannot be removed well.
It is not hard to see that this conclusion also holds when a distributed
space-time coding is applied at the relays as we will do later.
This motivates the study for this paper, which is that since the cross-talks
cannot be removed well, why do not we use them as the coding at the relays?
In the following, we propose a space-time coding scheme for this FD mode cooperative communication network by adopting the cross-talk signals.

\section{Coding for the case with cross-talk}\label{sec3}
\subsection{Construction of partial DLC-STC}\label{sec3a}
\setcounter{equation}{17}
\begin{figure*}[!t]
\normalsize
\setcounter{MYtempeqncnt}{\value{equation}}
\begin{equation}\label{equ11-a02}
\tilde{t}^{(1)}(i)\!=\left\{\!\!\!\!\!\!\!\!\begin{array}{lr}\begin{array}{rcl}&&\!\!\!\!\sum\limits_{n=0}^{\frac{L-3}{2}}\eta^n\left[\beta_1 h_{SR}^{(1)}{x}(i-(2n+1)\phi-\varphi_1)+\!\!\beta_1 h_{21}\beta_2h_{SR}^{(2)}{x}(i-(2n+2)\phi-\varphi_2)\right] \\
&&+\sum\limits_{n=0}^{\frac{L-3}{2}}\eta^n\left[\beta_1{w}^{(1)}(i-(2n+1)\phi)+\!\!\beta_1 h_{21}\beta_2{w}^{(2)}(i-(2n+2)\phi)\right]\\
&&+\eta^{\frac{L-1}{2}}\beta_1h_{SR}^{(1)}{x}(i-L\phi-\varphi_1)+\eta^{\frac{L-1}{2}}\beta_1h_{SR}^{(1)}{w}^{(1)}(i-L\phi)\\
&&+\,\,\eta^{\frac{L-1}{2}}\beta_1h_{21}{\tilde{t}}^{(2)}(i-L\phi),
\end{array}, \,\,\text{$L$ is odd}\\
\begin{array}{rcl}&&\!\!\!\!\sum\limits_{n=0}^{\frac{L}{2}-1}\eta^n\left[\beta_1 h_{SR}^{(1)}{x}(i-(2n+1)\phi-\varphi_1)+\!\!\beta_1 h_{21}\beta_2h_{SR}^{(2)}{x}(i-(2n+2)\phi-\varphi_2)\right] \\
&&+\sum\limits_{n=0}^{\frac{L}{2}-1}\eta^n\left[\beta_1{w}^{(1)}(i-(2n+1)\phi)+\!\!\beta_1 h_{21}\beta_2{w}^{(2)}(i-(2n+2)\phi)\right]\\
&&+\,\,\eta^{\frac{L}{2}}{\tilde{t}}^{(1)}(i-L\phi),
\end{array}, \,\,\text{$L$ is even}
\end{array}\right.
\end{equation}
\begin{equation}\label{equ11-a02-2}
\tilde{t}^{(1)}(i)\!=\left\{\!\!\!\!\!\!\!\!\begin{array}{lr}\begin{array}{rcl}&&\!\!\!\!\sum\limits_{n=0}^{\frac{L-3}{2}}\eta^n\left[\beta_1 h_{SR}^{(1)}{x}(i-(2n+1)\phi-\varphi_1)+\!\!\beta_1 h_{21}\beta_2h_{SR}^{(2)}{x}(i-(2n+2)\phi-\varphi_2)\right] \\
&&+\sum\limits_{n=0}^{\frac{L-3}{2}}\eta^n\left[\beta_1{w}^{(1)}(i-(2n+1)\phi)+\!\!\beta_1 h_{21}\beta_2{w}^{(2)}(i-(2n+2)\phi)\right]\\
&&+\eta^{\frac{L-1}{2}}\beta_1h_{SR}^{(1)}{x}(i-L\phi-\varphi_1)+\eta^{\frac{L-1}{2}}\beta_1h_{SR}^{(1)}{w}^{(1)}(i-L\phi),
\end{array},\,\,\text{$L$ is odd}\\
\begin{array}{rcl}&&\!\!\!\!\sum\limits_{n=0}^{\frac{L}{2}-1}\eta^n\left[\beta_1 h_{SR}^{(1)}{x}(i-(2n+1)\phi-\varphi_1)+\!\!\beta_1 h_{21}\beta_2h_{SR}^{(2)}{x}(i-(2n+2)\phi-\varphi_2)\right] \\
&&+\sum\limits_{n=0}^{\frac{L}{2}-1}\eta^n\left[\beta_1{w}^{(1)}(i-(2n+1)\phi)+\!\!\beta_1 h_{21}\beta_2{w}^{(2)}(i-(2n+2)\phi)\right],
\end{array},\,\,\text{$L$ is even}
\end{array}\right.
\end{equation}

\setcounter{equation}{\value{MYtempeqncnt}}
\hrulefill
\end{figure*}

From (\ref{equ1-1}), we can see that the second term at the right
hand side is the signal sent by Relay $k$ itself. Since the loop channel $h_{kk}$ is  known by Relay $k$, this term can be completely cancelled from the received signal (note that in \cite{Yiliu} the self-loop interference is intentionally not removed
completely but instead part of it is maintained as the self-coding). After
the complete cancellation, the signal model can be written as follows.

\setcounter{equation}{7}
\begin{eqnarray}
r^{(k)}(i)&=&h_{SR}^{(k)}x(i-\varphi_k)+h_{jk}t^{(j)}(i)+w^{(k)}_R(i)\label{equ2-0} \\
t^{(k)}(i)&=&\beta_k r^{(k)}(i-\phi) \label{equ2},
\end{eqnarray}
where $x(i)$ is the transmitted signal by the source node with normalized power $E_s=E[|x(i)|^2]=1$ and $w^{(k)}_R(i)$ is the additive $\mathcal{CN}(0,\sigma_R^2)$ noise at the receiver of Relay $k$. During the following analysis, we assume that the parameters $\beta_1, \beta_2, \varphi_1, \varphi_2, h_{21}, h_{12}$ and $\phi$ are known at both relays.

To avoid overlap between the neighboring coded frames due to the transmission delays, a simple method is to protect the data sequence with zero guard intervals \cite{4155128,Guo}. Suppose the zero padding length is $p$ and the data sequence to be sent by the source node is $s(i)$, then the zero padded signal sent by the source node is
\begin{equation}\label{equ11-s0}
x(i)\!=\!\left\{
\begin{array}{cc}
\!\!\!s(i-mp),&\!\!\!\!\!\!\!\!m(N+p)\leq i\leq m(N+p)+N\!-\!1\\
\!\!\!0,&\!\!\!\!\!\!\!\!\!\!\!\!m(N+p)+N\leq i\leq (m+1)(N+p)\!-\!1
\end{array}\right.\!\!\!\!,
\end{equation}
where $N$ is the data frame length, $N+p$ is the frame length, and $m$ is the frame index. The zero padding length $p$ will be specialized later.

Before we describe the partial DLC-STC, let us first see what relays transmit and receive, when the source sends the above framed and zero-padded signal $x(i)$.
Without loss of generality, let us only consider the 0th frame, i.e., $m=0$ in (\ref{equ11-s0}). In this case, substituting (\ref{equ2}) into (\ref{equ2-0}) recursively,
we obtain the signal received at Relay $k$ as (\ref{equ11-cstc01}), where $\eta=\beta_1\beta_2h_{12}h_{21}$.

Next we will show that there are always non-zero symbols in the $-1$th frame involved in $r^{(k)}(i)$ no matter how many zeroes are padded in $x(l)$ in (\ref{equ11-s0}).

To do so, let us rewrite (\ref{equ11-cstc01}) as (\ref{equ11-cstc02}).
For any $i\geq0$, no matter how $p$ is,  the condition that the index of $x(l)$ in (\ref{equ11-cstc02}) is within non-zero data symbols' indices of the
$-1$th frame is
\setcounter{equation}{12}
\begin{equation}\label{equ11-cstc03}
\left\{
\begin{array}{l}
-N-p\leq i-2u\phi-\varphi_k\leq-p-1\\
-N-p\leq i-(2v+1)\phi-\varphi_j\leq-p-1
\end{array}\right.,
\end{equation}
where the lower and the upper bounds for $u$ and $v$ can be obtained as
\begin{equation}\label{equ11-cstc04}
\left\{
\begin{array}{l}u_1=\lceil\frac{i-\varphi_k+p+1}{2\phi}\rceil \\
v_1=\lceil\frac{i-\varphi_j+\phi+p+1}{2\phi}\rceil\\
u_2=\lfloor\frac{i-\varphi_k+N+p}{2\phi}\rfloor \\
v_2=\lfloor\frac{i-\varphi_j+\phi+N+p}{2\phi}\rfloor\\
 \end{array}\right.,
\end{equation}
where $\lceil\cdot\rceil$ and $\lfloor\cdot\rfloor$ denote ceil and floor functions, respectively.

It is reasonable to assume that $N>2\phi$ since the number of the data symbols in one frame should not be too small in order to achieve a
reasonable spectrum efficiency. Because $N>2\phi$ and recalling the assumption $\phi\geq \max\{\varphi_k,\varphi_j\}+1$ in Section \ref{sec2}, we can obtain that $0\leq v_1\leq v_2$ and $0\leq u_1\leq u_2$ in (\ref{equ11-cstc04}), that is, there are always non-negative solutions to (\ref{equ11-cstc03})
for $u$ and $v$.
Note that, for any $i$,
when $u_1=0$ or $v_1=0$, the corresponding term
in the  summation in the first line of (\ref{equ11-cstc02}) becomes zero.

Thus, the first line of (\ref{equ11-cstc02}) is made up of the symbols of
the $0$th frame and the zero symbols of the $-1$th frame (if any). The second line involves the non-zero symbols in the $-1$th frame. In fact,
non-zero symbols in other frames (e.g., $-2$th, $-3$th,  ...) are also similarly involved in the third line of (\ref{equ11-cstc02}).
All the non-zero symbols in the previous frames ($-1$th, $-2$th, $-3$th,  ...) are interference for $r^{(k)}(i)$ of the $0$th frame. Thus,
it shows that this type of inter-frame interference cannot be avoided no matter how many zeros are padded in $x(l)$, i.e., no matter how large $p$ is in (\ref{equ11-s0}).

\setcounter{equation}{24}
\begin{figure*}[!t]
\normalsize
\setcounter{MYtempeqncnt}{\value{equation}}
\begin{equation}\label{equ11-1}
\begin{array}{rcl}
\tilde{t}^{(2)}(i)\!=\left\{\!\!\!\!\!\!\!\!\begin{array}{lr}\begin{array}{rcl}&&\!\!\!\!\sum\limits_{n=0}^{\frac{L-3}{2}}\eta^n\left[\beta_2 h_{SR}^{(2)}{x}(i-(2n+1)\phi-\varphi_2)+\!\!\beta_2 h_{12}\beta_1h_{SR}^{(1)}{x}(i-(2n+2)\phi-\varphi_1)\right] \\
&&+\sum\limits_{n=0}^{\frac{L-3}{2}}\eta^n\left[\beta_2{w}^{(2)}(i-(2n+1)\phi)+\!\!\beta_2 h_{12}\beta_1{w}^{(1)}(i-(2n+2)\phi)\right]\\
&&+\eta^{\frac{L-1}{2}}\beta_2h_{SR}^{(2)}{x}(i-L\phi-\varphi_2)+\eta^{\frac{L-1}{2}}\beta_2h_{SR}^{(2)}{w}^{(2)}(i-L\phi)\\
&&+\,\,\eta^{\frac{L-1}{2}}\beta_2h_{12}{\tilde{t}}^{(1)}(i-L\phi),
\end{array},\,\,\text{$L$ is odd}\\
\begin{array}{rcl}&&\!\!\!\!\sum\limits_{n=0}^{\frac{L}{2}-1}\eta^n\left[\beta_2 h_{SR}^{(2)}{x}(i-(2n+1)\phi-\varphi_2)+\!\!\beta_2 h_{12}\beta_1h_{SR}^{(1)}{x}(i-(2n+2)\phi-\varphi_1)\right] \\
&&+\sum\limits_{n=0}^{\frac{L}{2}-1}\eta^n\left[\beta_2{w}^{(2)}(i-(2n+1)\phi)+\!\!\beta_2 h_{12}\beta_1{w}^{(1)}(i-(2n+2)\phi)\right]\\
&&+\,\,\eta^{\frac{L}{2}}{\tilde{t}}^{(2)}(i-L\phi),
\end{array},\,\,\text{$L$ is even}
\end{array}\right.
\end{array}
\end{equation}
\begin{equation}\label{equ11-101}
\begin{array}{rcl}
\tilde{t}^{(2)}(i)\!=\left\{\!\!\!\!\!\!\!\!\begin{array}{lr}\begin{array}{rcl}&&\!\!\!\!\sum\limits_{n=0}^{\frac{L-3}{2}}\eta^n\left[\beta_2 h_{SR}^{(2)}{x}(i-(2n+1)\phi-\varphi_2)+\!\!\beta_2 h_{12}\beta_1h_{SR}^{(1)}{x}(i-(2n+2)\phi-\varphi_1)\right] \\
&&+\sum\limits_{n=0}^{\frac{L-3}{2}}\eta^n\left[\beta_2{w}^{(2)}(i-(2n+1)\phi)+\!\!\beta_2 h_{12}\beta_1{w}^{(1)}(i-(2n+2)\phi)\right]\\
&&+\eta^{\frac{L-1}{2}}\beta_2h_{SR}^{(2)}{x}(i-L\phi-\varphi_2)+\eta^{\frac{L-1}{2}}\beta_2h_{SR}^{(2)}{w}^{(2)}(i-L\phi),
\end{array},\,\,\text{$L$ is odd}\\
\begin{array}{rcl}&&\!\!\!\!\sum\limits_{n=0}^{\frac{L}{2}-1}\eta^n\left[\beta_2 h_{SR}^{(2)}{x}(i-(2n+1)\phi-\varphi_2)+\!\!\beta_2 h_{12}\beta_1h_{SR}^{(1)}{x}(i-(2n+2)\phi-\varphi_1)\right] \\
&&+\sum\limits_{n=0}^{\frac{L}{2}-1}\eta^n\left[\beta_2{w}^{(2)}(i-(2n+1)\phi)+\!\!\beta_2 h_{12}\beta_1{w}^{(1)}(i-(2n+2)\phi)\right],
\end{array},\,\,\text{$L$ is even}
\end{array}\right.
\end{array}
\end{equation}

\setcounter{equation}{\value{MYtempeqncnt}}
\hrulefill
\end{figure*}
\setcounter{equation}{14}

To avoid the inter-frame interference in $r^{(k)}(i)$,
another zero padding at the relay transmission is needed,
which means that the relays send nothing but just keep receiving
during the zero padding period. The truly transmitted signals at the
relays during the $0$th frame have the following form:
\begin{equation}\label{equ11-a01}
\tilde{t}^{(k)}(i)=\left\{\begin{array}{cc}
0,&0 \leq i\leq \phi-1 \\
\beta_k \tilde{r}^{(k)}(i-\phi),&\phi \leq i< N+p
\end{array}\right.,
\end{equation}
The signal received at Relay $k$ after the self-loop interference signal is removed is
\begin{equation}\label{equ11-r01}
\tilde{r}^{(k)}(i)=h_{SR}^{(k)}x(i-\varphi_k)+h_{jk}\tilde{t}^{(j)}(i)+w^{(k)}_R(i), 0\leq i< N+p,
\end{equation}
where $k$ and $j$ are the indexes of the two relays.
Let us see what the signals sent in (\ref{equ11-a01})
at the relays are in details.
The signal sent by Relay 1 can be written as follows in two segments.

When $0 \leq i<\phi$, Relay 1 sends nothing and just keeps receiving, that is,
\begin{equation}\label{equ11-ac02}
\tilde{t}^{(1)}(i)=0.
\end{equation}

When $\phi \leq i< N+p$, the signal sent by Relay 1 can be written as (\ref{equ11-a02}),
where $L=\lfloor \frac{i}{\phi}\rfloor$ and $\eta=\beta_1\beta_2h_{12}h_{21}$. 
\setcounter{equation}{19}
It is clear that $0\leq i-L\phi \leq \phi-1$. This implies
$\tilde{t}^{(k)}(i-L\phi)=0$ for $k=1,2$ from (\ref{equ11-a01}).
Therefore, (\ref{equ11-a02}) can be rewritten as (\ref{equ11-a02-2}).


The signal part involving $x(i)$  in (\ref{equ11-a02-2}) can be re-formulated as follows:
\begin{equation}
 \mathbf{t}^{(1)}=[\tilde{t}^{(1)}(0),\tilde{t}^{(1)}(1),\cdots,\tilde{t}^{(1)}(N+p-1)].
\end{equation}
Then,  it can be regarded as a part of the following convolution
 \begin{equation}
 \mathbf{\tilde{m}}_1*\mathbf{x}=[\mathbf{0}_{\phi+\varphi_1} \,\,\mathbf{m}_1\,\,\mathbf{0}_{\max\{\varphi_1,\varphi_2\}-\varphi_1}]*\mathbf{x},
\end{equation}
where $\mathbf{x}=[x(0),x(1),\ldots,x(N+p-1)]=[s(0),s(1),\ldots,s(N-1), \mathbf{0}_p]$, $\mathbf{0}_k$
is the all zero vector of size $k$,
and
\begin{equation}\label{equ_m1}
\begin{array}{ll}\mathbf{m}_1\!\!=\![\!\!\!\!\!&\beta_1 h_{SR}^{(1)} \,\,\,\mathbf{0}_{\phi+\varphi_2-\varphi_1-1}\,\,\,\beta_1 h_{21}\beta_2h_{SR}^{(2)}\,\,\, \mathbf{0}_{\phi+\varphi_1-\varphi_2-1} \,\cdots\\
&\eta^{\Gamma }\beta_1 h_{SR}^{(1)}\,\, \mathbf{0}_{\phi+\varphi_2-\varphi_1-1} \,\,\eta^{\Gamma }\beta_1 h_{21}\beta_2h_{SR}^{(2)}\,\, \mathbf{0}_{\phi+\varphi_1-\varphi_2-1}]\end{array}\!\!,
\end{equation}
where $\Gamma=\lfloor \frac{N+p-1-\phi}{2\phi}\rfloor$.
In fact, we have
\begin{equation}\label{23}
\mathbf{t}^{(1)}=[\mathbf{\tilde{m}}_1*\mathbf{x}]_{ N+p},
\end{equation}
where $[\mathbf{u}]_{k}$ denotes the vector formed by the first $k$ elements of $\mathbf{u}$, i.e.,
$[\mathbf{u}]_{k}=[u(0), u(1), \cdots, u(k-1)]$.
From (\ref{23}), one can see that the effective signal part $\mathbf{t}^{(1)}$
transmitted at Relay 1 is a coded signal of the original
data sequence $s(i)$ with the generator sequence $\mathbf{\tilde{m}}_1$.

The same as Relay 1, the effective signal part sent from Relay 2 is as follow.

When $0\leq i< \phi$, Relay 2 transmits nothing but keeps receiving, that is,
\begin{equation}\label{equ11-ac03}
\tilde t^{(2)}(i)=0.
\end{equation}

When $\phi \leq i< N+p-1$, the signal sent by Relay 2 can be written as (\ref{equ11-1}).
From (\ref{equ11-a01}), it is clear that $0 \leq i-L\phi\leq \phi-1$, ${t}^{(k)}(i-L\phi)=0$ for $k=1,2$. Thus, (\ref{equ11-1}) can also be rewritten as (\ref{equ11-101}).
\setcounter{equation}{26}
The signal part involving $x(i)$  in (\ref{equ11-101}) can be re-formulated as follows:
\begin{equation}
\mathbf{t}^{(2)}=[\tilde{t}^{(2)}(0),\tilde{t}^{(2)}(1),\cdots,\tilde{t}^{(2)}(N+p-1)],
\end{equation}
which can also be written as
\begin{equation}\label{30}
\mathbf{t}^{(2)}=[\mathbf{\tilde{m}}_2*\mathbf{x}]_{ N+p},
\end{equation}
where
\begin{equation}
\mathbf{\tilde{m}}_2=[\mathbf{0}_{\phi+\varphi_2}  \,\,\mathbf{m}_2\,\,\mathbf{0}_{\max\{\varphi_1,\varphi_2\}-\varphi_2}],
\end{equation}
and
\begin{equation}\label{equ_m2}
\begin{array}{ll}\mathbf{m}_2\!\!=\![\!\!\!\!\!&\beta_2 h_{SR}^{(2)} \,\,\,\mathbf{0}_{\phi+\varphi_1-\varphi_2-1} \,\,\,\beta_2 h_{12}\beta_1h_{SR}^{(1)}\,\,\, \mathbf{0}_{\phi+\varphi_2-\varphi_1-1}  \cdots\\
&\eta^{\Gamma}\beta_2 h_{SR}^{(2)}\,\, \mathbf{0}_{\phi+\varphi_1-\varphi_2-1} \,\,\eta^{\Gamma}\beta_2 h_{12}\beta_1h_{SR}^{(1)}\,\,\mathbf{0}_{\phi+\varphi_2-\varphi_1-1}].
\end{array}
\end{equation}
Thus, the effective signal part $\mathbf{t}^{(2)}$
transmitted at Relay 2 is also a coded signal of the original
data sequence $s(i)$ with the generator sequence $\mathbf{\tilde{m}}_2$.

To normalize the mean transmission power at the relays, the amplifying factors should satisfy
\begin{equation}\label{equ13-2}
E[|\mathbf{\tilde{m}}_1|^2]=E[|\mathbf{\tilde{m}}_2|^2]=1.
\end{equation}
Substituting $\mathbf{\tilde{m}}_1$ and $\mathbf{\tilde{m}}_2$ into (\ref{equ13-2}) and considering $E[|h_{SR}^{(1)}|^2]=E[|h_{SR}^{(2)}|^2]=1$, we obtain
\begin{equation}\label{equ14-0}
\left\{\begin{array}{lcc}
\sum_{n=0}^{\Gamma}|\eta|^{2n} \left(|\beta_1|^2+|\beta_1 h_{21}\beta_2|^2\right)&=& 1 \\
 \sum_{n=0}^{\Gamma}|\eta|^{2n} \left(|\beta_2|^2+|\beta_2 h_{12}\beta_1|^2\right)&=& 1
\end{array}\right.,
\end{equation}
where the channel coefficients $h_{jk}$ between the two relays
are treated deterministic and known at the relays in the above equation.
This is because $\eta=\beta_1\beta_2 h_{12} h_{21}$ and if we treat $h_{jk}$
as random variables similar
to $h_{SR}^{(j)}$, it will be not possible to have their moments of
all even orders needed in (\ref{equ14-0}) and thus not possible to solve
$\beta_j$.
From (\ref{equ14-0}),
the amplifying factors $\beta_1$ and $\beta_2$ can be found.
\setcounter{equation}{35}
\begin{figure*}[!t]
\normalsize
\setcounter{MYtempeqncnt}{\value{equation}}
\begin{equation}\label{codeMatrix001}
\mathbf{\hat{A}}=\left[\begin{array}{cc}\mathbf{\hat{a}}_1\\
\mathbf{\hat{a}}_2\end{array}\right]
=\left[\begin{array}{c} \left[a_{11} \,\, \mathbf{0}_{\psi_1} \,\, a_{12} \,\, \mathbf{0}_{\xi-\psi_1}\,\, Ka_{11} \,\, \mathbf{0}_{\psi_1} \,\, Ka_{12} \,\, \mathbf{0}_{\xi-\psi_1}\,\, K^2a_{11} \,\, \mathbf{0}_{\psi_1} \,\, K^2a_{12} \,\, \mathbf{0}_{\xi-\psi_1} \cdots\right]\\
\left[a_{21} \,\, \mathbf{0}_{\psi_2}\,\, a_{22} \,\, \mathbf{0}_{\xi-\psi_2}\,\,Ka_{21} \,\, \mathbf{0}_{\psi_2}\,\, Ka_{22} \,\, \mathbf{0}_{\xi-\psi_2}\,\,K^2a_{21} \,\, \mathbf{0}_{\psi_2}\,\, K^2a_{22} \,\, \mathbf{0}_{\xi-\psi_2} \cdots\right]\end{array}\right]
\end{equation}

\setcounter{equation}{\value{MYtempeqncnt}}
\hrulefill
\end{figure*}

\setcounter{equation}{32}
From (\ref{equ11-a02-2}), (\ref{equ11-101}) and (\ref{equ14-0}), it is not hard to see that the noise in the signal to be sent at the relays is zero mean and with the variance of $\sigma_R^2$. So the transmission SNRs at the two relays are
\begin{equation}\label{equ14-1}
\gamma_1=\gamma_2=\frac{1}{\sigma_R^2}.
\end{equation}
We can see that unlike (\ref{equ2-14}), the noises are not amplified or accumulated at the relay nodes.


From (\ref{23}) and (\ref{30}), one can see that the two transmitted signals at the two relays are generated by the same signal ${\bf x}$ with two generator
sequences $\tilde{\bf m}_j$, $j=1,2$. If these two generator sequences
are put to a generator matrix of two rows as

\begin{equation}\label{equ_orig_mat1}
\mathbf{\tilde{M}}=\left[\begin{array}{c} \mathbf{\tilde{m}}_1\\
\mathbf{\tilde{m}}_2\end{array}\right]=\left[\begin{array}{c} \mathbf{0}_{\phi+\varphi_1}  \,\,\mathbf{m}_1\,\,\mathbf{0}_{\max\{\varphi_1,\varphi_2\}-\varphi_1}\\
\mathbf{0}_{\phi+\varphi_2}  \,\,\mathbf{m}_2\,\,\mathbf{0}_{\max\{\varphi_1,\varphi_2\}-\varphi_2}\end{array}\right],
\end{equation}
then, the two transmitted signals at the two relays are the outputs
of the signal ${\bf x}$ with the above encoding generator matrix.
At the destination, these two signals are received through two fading channels $h_{RD}^{(j)}$, $j=1, 2$.
What we are interested now is whether this system can achieve the spatial diversity of two
from the two relays. Similar to the case studied in \cite{Yiliu},
these two signals  may not be synchronized and
may arrive at the destination at different times. Thus, we may apply the theory of shift full rank (SFR)
matrices developed in  \cite{4155128, Shang001, Guo} and,
 to check the diversity property, we need to check whether the generator matrix $\tilde{\bf M}$
is SFR, where the synchronization between the two relays may not be achieved.

Different from \cite{Yiliu} where the coding process at the relay is independent of the other link, i.e., the direct link, the coding processes at the two relays are not independent in this case because the data are exchanged through the cross-talk links. For this reason, the source to relay channels $h_{SR}^{(j)}$, $j=1, 2$, have nonlinear effect on the generator matrix $\tilde{\bf M}$, which can be seen from (\ref{equ_m1}) and (\ref{equ_m2}). So $h_{SR}^{(j)}$, $j=1, 2$, are included in $\tilde{\bf M}$ when checking its SFR property in the following.

\subsection{Diversity analysis}\label{sec3b}

An SFR matrix is a matrix that has full row rank no matter how its rows are shifted.
For more about SFR matrices, we refer to \cite{4155128, Shang001, Guo}.
Clearly, the SFR property of the generator matrix $\tilde{\bf M}$ is equivalent to
that of the following matrix
\begin{equation}\label{equ12}
\mathbf{\hat{M}}=\left[\begin{array}{c} \mathbf{{m}}_1\\
\mathbf{{m}}_2\end{array}\right],
\end{equation}
which will be studied in this subsection.
To do so,
we first have the  following lemma.
\begin{Lemma}\label{lemma1}

Let $K$ be a constant. Matrix $\mathbf{\hat{A}}$ defined by (\ref{codeMatrix001})
is an SFR matrix for any $\psi_1$ and $\psi_2$ with $0\leq \psi_1, \psi_2 \leq \xi$ if and only if $\mathbf{A}=\left[\begin{array}{cc} a_{11}&a_{12}\\
a_{21}&a_{22}\end{array}\right]$ is an SFR matrix.
\end{Lemma}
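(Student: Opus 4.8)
The plan is to translate the shift-full-rank (SFR) condition into a statement about the row polynomials and then exploit the fact that the two rows of $\mathbf{\hat A}$ share a common factor. Recall the polynomial form of the SFR definition from \cite{4155128,Shang001,Guo}: a two-row matrix whose rows are the coefficient vectors of polynomials $g_1(z),g_2(z)$ is SFR if and only if $g_1(z)\neq \lambda z^{t} g_2(z)$ for every scalar $\lambda$ and every integer $t$, i.e. no relative shift of the two rows can turn one into a scalar multiple of the other. I would state this characterization first, so that the whole lemma becomes a comparison of two such polynomial conditions.

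Next I would record the key structural observation. Each block of row $1$ has length $1+\psi_1+1+(\xi-\psi_1)=\xi+2$ and each block of row $2$ has length $1+\psi_2+1+(\xi-\psi_2)=\xi+2$, so \emph{both} rows are periodic with the \emph{same} period $P=\xi+2$ (independent of $\psi_1,\psi_2$) and carry the \emph{same} geometric weighting $1,K,K^{2},\dots$ across successive blocks. Consequently the row polynomials factor as $\hat a_1(z)=q_1(z)s(z)$ and $\hat a_2(z)=q_2(z)s(z)$, where $q_1(z)=a_{11}+a_{12}z^{\psi_1+1}$, $q_2(z)=a_{21}+a_{22}z^{\psi_2+1}$, and $s(z)=\sum_{n}(Kz^{P})^{n}$ is the common block factor. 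Since $s(z)$ has constant term $1$ it is a nonzero element of the integral domain $\mathbb{C}[z]$, hence $\hat a_1=\lambda z^{t}\hat a_2$ if and only if $q_1=\lambda z^{t}q_2$. This cancels the periodic part once and for all and reduces the SFR property of $\mathbf{\hat A}$ to the SFR property of the two-row matrix with rows $q_1,q_2$.

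For sufficiency ($\mathbf A$ SFR $\Rightarrow \mathbf{\hat A}$ SFR for all $\psi_1,\psi_2$) I would fix arbitrary $\psi_1,\psi_2$ and argue by contraposition. A relation $q_1=\lambda z^{t}q_2$ equates two polynomials with supports in $\{0,\psi_1+1\}$ and $\{t,\,t+\psi_2+1\}$; matching these two-element supports forces $t=0$, $\psi_1=\psi_2$, and $a_{11}=\lambda a_{21}$, $a_{12}=\lambda a_{22}$ (the cross-matching option requires a negative exponent and is impossible). This yields $p_1(z):=a_{11}+a_{12}z=\lambda(a_{21}+a_{22}z)=:\lambda p_2(z)$, so $\mathbf A$ is not SFR. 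Hence $\mathbf A$ SFR rules out $q_1=\lambda z^{t}q_2$ for every $\psi_1,\psi_2$, giving $\mathbf{\hat A}$ SFR. For necessity I would simply specialize $\psi_1=\psi_2=0$, where $q_i=p_i$ and the reduction of the previous step makes $\mathbf{\hat A}$ SFR equivalent to $\mathbf A$ SFR outright; thus if $\mathbf A$ is not SFR, then $\mathbf{\hat A}$ already fails to be SFR at $\psi_1=\psi_2=0$, so it cannot be SFR for all $\psi_1,\psi_2$.

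I expect the main obstacle to be the support-matching bookkeeping when $\psi_1\neq\psi_2$, together with the degenerate configurations in which some $a_{ij}$ vanish, so that $q_1$ or $q_2$ collapses to a monomial (or a row becomes zero). Each such case must be checked separately, but all are settled by the same comparison of supports and reach the same conclusion, so they are routine rather than deep. The one genuinely load-bearing ingredient is the two-row polynomial characterization of SFR from \cite{4155128,Shang001,Guo}; once that is in hand, the common-factor cancellation makes the equivalence essentially immediate.
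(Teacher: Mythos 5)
Your proof is correct and arrives at the same conclusion, but it does considerably more work than the paper, which disposes of the lemma in two sentences: it quotes the two-row criterion from \cite{Guo} (a two-row matrix is SFR iff its rows are linearly independent) and then asserts, without argument, that the rows of $\mathbf{\hat{A}}$ and the rows of $\mathbf{A}$ are ``either both linearly independent or both linearly dependent.'' Your common-factor decomposition $\hat{a}_k(z)=q_k(z)s(z)$ with $s(z)=\sum_n (Kz^{\xi+2})^n$ is exactly the mechanism the paper leaves implicit, and the support-matching step is where the actual content lies. Your version is in fact more careful than the paper's on one substantive point: the paper's ``both independent or both dependent'' claim, read for a fixed pair $(\psi_1,\psi_2)$ with $\psi_1\neq\psi_2$, is false --- for example, with $a_{11}=a_{12}=a_{21}=a_{22}=1$, $\psi_1=0$, $\psi_2=1$, the rows of $\mathbf{A}$ are dependent while the rows of $\mathbf{\hat{A}}$ are independent, so $\mathbf{\hat{A}}$ can be SFR for some choices of $(\psi_1,\psi_2)$ even when $\mathbf{A}$ is not. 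Your argument correctly locates where the ``for any $\psi_1,\psi_2$'' quantifier is load-bearing: sufficiency holds pointwise in $(\psi_1,\psi_2)$, while necessity requires specializing to $\psi_1=\psi_2$. The only quibble is cosmetic: in the cross-matching branch you say it ``requires a negative exponent and is impossible,'' but negative $t$ is perfectly admissible in the shift characterization (the shift between the two rows is relative); what actually kills that branch is that it forces $t$ to be simultaneously negative and positive.
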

\begin{proof}
In \cite{Guo}, the necessary and sufficient condition for two-row SFR matrix is: {\it The two-row matrix is an SFR matrix iff the two rows are linearly independent.}

It is not hard to prove that the two rows of $\mathbf{\hat{A}}$ and the two rows of $\mathbf{A}$ are either both linearly independent or both linearly dependent. So we obtain that $\mathbf{\hat{A}}$ and $\mathbf{A}$ have the same SFR property.
%
\end{proof}
\setcounter{equation}{36}
\begin{Lemma}\label{lemma1-0}
Suppose $\mathbf{\hat{A}}=\left[\begin{array}{cc}\mathbf{\hat{a}}_1\\
\mathbf{\hat{a}}_2\end{array}\right]$ is defined by (\ref{codeMatrix001})
and $\mathbf{x}$ is $1\times (N+p)$ vector. If and only if $p\geq \xi+1$, the partial DLC-STC
\begin{equation}\label{equ-code}
\mathbf{\hat{C}}=\left[\begin{array}{cc}\left[\mathbf{\hat{a}_1}*\mathbf{x}\right]_{N+p}\\
\left[\mathbf{\hat{a}_2}*\mathbf{x}\right]_{N+p}\end{array}\right]
\end{equation}
can achieve the same asynchronous diversity as the DLC-STC
\begin{equation}\label{equ-dlc-stc}
\mathbf{\bar{C}}=\left[\begin{array}{cc}\bar{\mathbf{a}}_1*\mathbf{x}\\
\bar{\mathbf{a}}_2*\mathbf{x}\end{array}\right]
\end{equation}
Here, $\bar{\mathbf{a}}_1=\left[a_{11} \,\,\,\, \mathbf{0}_{\psi_1} \,\,\,\, a_{12} \,\,\,\, \mathbf{0}_{\xi-\psi_1}\right]$ and $\bar{\mathbf{a}}_2=\left[a_{21} \,\,\,\, \mathbf{0}_{\psi_2}\,\,\,\, a_{22} \,\,\,\, \mathbf{0}_{\xi-\psi_2}\right]$, where $\psi_1$, $\psi_2$, and $K$ are constants with $0\leq \psi_1, \psi_2 \leq \xi$.
\end{Lemma}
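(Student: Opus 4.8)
The plan is to link the truncated, infinite-memory code $\mathbf{\hat C}$ to the finite single-block code $\mathbf{\bar C}$ through a common invertible filtering, and then to determine exactly when the truncation $[\,\cdot\,]_{N+p}$ preserves this link. I would work with code differences throughout, replacing $\mathbf x$ by an arbitrary nonzero error vector $\mathbf e$ (the difference of two zero-padded frames, hence supported in the first $N$ coordinates), because the asynchronous diversity equals the minimum row rank of the two relatively shifted difference codewords, taken over all $\mathbf e\neq\mathbf 0$ and all admissible relative delays.

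First I would record that $\hat{\mathbf a}_i$ is a $K$-scaled periodic repetition of the single block $\bar{\mathbf a}_i$ of period $\xi+2$, so that $\hat{\mathbf a}_i=\mathbf d*\bar{\mathbf a}_i$ with the common sequence $\mathbf d=[1,\mathbf 0_{\xi+1},K,\mathbf 0_{\xi+1},K^2,\dots]$, whose $z$-transform $1/(1-Kz^{-(\xi+2)})$ has unit leading tap and is therefore invertible. Since $\mathbf d$ is identical for both rows, each (untruncated) row of $\mathbf{\hat C}$ is the corresponding row of $\mathbf{\bar C}$ passed through one and the same invertible filter. Together with Lemma~\ref{lemma1}, which already identifies the SFR property of $\mathbf{\hat A}$, of the single-block matrix $[\bar{\mathbf a}_1;\bar{\mathbf a}_2]$, and of $\mathbf A$, this shows the two codes share the same diversity before truncation and isolates the truncation $[\,\cdot\,]_{N+p}$ as the only remaining issue.

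For sufficiency I would show that $p\ge\xi+1$ makes the length-$(N+p)$ window wide enough to contain, for every admissible relative delay, a complete period of each generator row together with the part of its IIR folding that reaches into the window; equivalently, the discarded tail contributes nothing that is not already redundant. In that regime the windowed $\mathbf{\hat C}$ equals the losslessly windowed $\mathbf{\bar C}$ left-multiplied by the common, unit-diagonal lower-triangular Toeplitz matrix induced by $\mathbf d$. Being invertible and shared by both rows, this transform preserves the row rank of every relatively shifted difference matrix, so $\mathbf{\hat C}$ attains exactly the asynchronous diversity of $\mathbf{\bar C}$.

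For necessity I would argue that when $p\le\xi$ the window is shorter than one period, so some relative delay pushes an essential tap of a period past column $N+p-1$ while the taps of the next scaled block---which begin only at offset $\xi+2$---also fall outside the window and hence cannot restore what was lost. Choosing such an error and delay, one checks that the resulting $2\times(N+p)$ difference matrix collapses to row rank $1$, whereas the full $\mathbf{\bar C}$ keeps row rank $2$; this strict loss shows the same diversity is unattainable. I expect this boundary bookkeeping to be the main obstacle: one must track, against the period $\xi+2$ and the largest admissible delay, exactly which taps survive the window, and verify that $p=\xi$ already fails while $p=\xi+1$ just suffices, which is where the periodic IIR structure, the zero guard, and the asynchronism interact most delicately.
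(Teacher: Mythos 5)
Your proposal follows essentially the same route as the paper's Appendix~A proof: decompose $\hat{\mathbf{a}}_k$ as the single block $\bar{\mathbf{a}}_k$ plus its $K$-scaled shifts by multiples of $\xi+2$ (your common filter $\mathbf{d}$), observe that this common transform acts as an invertible column operation on the difference matrices and so preserves rank, reduce to comparing the truncated and untruncated single-block difference matrices, and then note that the truncation to $N+p$ columns is lossless exactly when $N+p\geq N+\xi+1$. The only substantive difference is the necessity direction, where you propose to exhibit an explicit error/delay pair forcing a rank-one collapse, whereas the paper argues only that rank preservation ``cannot be ensured'' for $p\le\xi$ since $\bar{\mathbf{a}}_1$, $\bar{\mathbf{a}}_2$, and $\mathbf{e}$ are random --- your version is more constructive, though as stated it would still need to be checked against the particular $\psi_1,\psi_2$ at hand.
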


The proof is in Appendix A.

For the model considered in this paper, $\mathbf{\hat{A}}$ in (\ref{codeMatrix001}) equals $\mathbf{\hat{M}}$ in (\ref{equ12}) and $\mathbf{\bar{A}}=\left[\begin{array}{c} \mathbf{\bar{a}}_1\\ \mathbf{\bar{a}}_2\end{array}\right]$ in (\ref{equ-dlc-stc}) equals $\mathbf{\bar{M}}=\left[\begin{array}{c} \mathbf{\bar{m}}_1\\ \mathbf{\bar{m}}_2\end{array}\right]=\left[\begin{array}{cccc} \beta_1 h_{SR}^{(1)}&\mathbf{0}_{\phi+\varphi_2-\varphi_1-1}&\beta_1 h_{21}\beta_2h_{SR}^{(2)}&\mathbf{0}_{\phi+\varphi_1-\varphi_2-1}\\ \beta_2 h_{SR}^{(2)} &\mathbf{0}_{\phi+\varphi_1-\varphi_2-1}&\beta_2 h_{12}\beta_1h_{SR}^{(1)}&\mathbf{0}_{\phi+\varphi_2-\varphi_1-1}\end{array}\right]$. And $\psi_1=\phi+\varphi_2-\varphi_1-1$, $\psi_2=\phi+\varphi_1-\varphi_2-1$, $\xi=2\phi-2$. Thus, $p\geq2\phi-1$. The two transmitted signals $\mathbf{t}^{(1)}$ in (\ref{23}) and $\mathbf{t}^{(2)}$ in (\ref{30}) are truncated sequences of $\mathbf{\tilde{m}}_1*\mathbf{x}$ and $\mathbf{\tilde{m}}_2*\mathbf{x}$,
but when $p\geq 2\phi-1$, they contain $\mathbf{\bar{t}}_1=\mathbf{\bar{m}}_1*\mathbf{x}$ and $\mathbf{\bar{t}}_2=\mathbf{\bar{m}}_2*\mathbf{x}$, respectively. From Lemma 2, we only need to check the SFR property of $\mathbf{\bar{M}}$.


As for the SFR property of the generator matrix of the proposed scheme, we have the following theorem.
\begin{Theorem}\label{Theorem1}
$\mathbf{\hat{M}}$ in (\ref{equ12}) (or $\mathbf{\bar{M}}$) is SFR if and only if $\beta_1h_{12}(h_{SR}^{(1)})^2\neq \beta_2h_{21}(h_{SR}^{(2)})^2$, $\beta_1 h_{SR}^{(1)}\neq0$, $\beta_2 h_{SR}^{(2)}\neq0$, where $\beta_1$ and $\beta_2$ are defined by (\ref{equ14-0}).
\end{Theorem}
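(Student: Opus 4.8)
The plan is to collapse the SFR question for the infinite, $\eta$-weighted matrix $\mathbf{\hat{M}}$ down to a single $2\times2$ determinant, and then to recover the three stated conditions from it. First I would match $\mathbf{\hat{M}}$ of (\ref{equ12}) to the template $\mathbf{\hat{A}}$ of (\ref{codeMatrix001}) with $K=\eta$, $a_{11}=\beta_1 h_{SR}^{(1)}$, $a_{12}=\beta_1\beta_2 h_{21}h_{SR}^{(2)}$, $a_{21}=\beta_2 h_{SR}^{(2)}$, $a_{22}=\beta_1\beta_2 h_{12}h_{SR}^{(1)}$, $\psi_1=\phi+\varphi_2-\varphi_1-1$, $\psi_2=\phi+\varphi_1-\varphi_2-1$, $\xi=2\phi-2$. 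Lemma \ref{lemma1} then reduces the SFR property of $\mathbf{\hat{M}}$ to that of the $2\times2$ core $\mathbf{A}$, and Lemma \ref{lemma1-0}, applicable since $p\geq2\phi-1=\xi+1$, confirms that the truncated relay signals $\mathbf{t}^{(1)},\mathbf{t}^{(2)}$ inherit the same asynchronous diversity, so it is enough to analyse $\mathbf{A}$ (equivalently $\mathbf{\bar{M}}$).

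Next I would apply the two-row criterion cited in the proof of Lemma \ref{lemma1} (from \cite{Guo}): a two-row matrix is SFR iff its rows are linearly independent, which for the $2\times2$ matrix $\mathbf{A}$ is just $\det\mathbf{A}\neq0$. A direct expansion gives $\det\mathbf{A}=a_{11}a_{22}-a_{12}a_{21}=\beta_1\beta_2\big(\beta_1 h_{12}(h_{SR}^{(1)})^2-\beta_2 h_{21}(h_{SR}^{(2)})^2\big)$, so through Lemma \ref{lemma1} the matrix $\mathbf{\hat{M}}$ is SFR exactly when $\beta_1\neq0$, $\beta_2\neq0$, and $\beta_1 h_{12}(h_{SR}^{(1)})^2\neq\beta_2 h_{21}(h_{SR}^{(2)})^2$. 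This already delivers the first stated condition and the nonvanishing of $\beta_1,\beta_2$.

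To upgrade ``$\beta_1\neq0,\beta_2\neq0$'' to the sharper ``$\beta_1 h_{SR}^{(1)}\neq0,\beta_2 h_{SR}^{(2)}\neq0$'', I would argue on the actual finite rows $\mathbf{\bar{m}}_1,\mathbf{\bar{m}}_2$, whose leading (constant) entries are precisely $\beta_1 h_{SR}^{(1)}$ and $\beta_2 h_{SR}^{(2)}$ and whose second nonzero entries sit at delays $\phi+\varphi_2-\varphi_1$ and $\phi+\varphi_1-\varphi_2$. If, say, $\beta_1 h_{SR}^{(1)}=0$, then $\mathbf{\bar{m}}_1$ loses its constant term and collapses either to the zero row or to a single (monomial) term; in either case one can shift the two rows into alignment and force a rank drop, so the matrix is not SFR. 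This shows each of $\beta_1 h_{SR}^{(1)}\neq0$ and $\beta_2 h_{SR}^{(2)}\neq0$ is necessary. Conversely, when both constant terms are present, any relative shift creating a linear dependence must preserve those constant terms, hence be the zero shift, so dependence reduces to plain proportionality of the rows, i.e. to $\det\mathbf{A}=0$; together the three conditions are then also sufficient.

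The step I expect to be the main obstacle is this last bookkeeping of degenerate shifts. The clean determinant test $\det\mathbf{A}\neq0$ only detects dependence under the zero shift, whereas SFR must exclude dependence under every relative shift, and the extra alignments become possible exactly when a constant term $\beta_k h_{SR}^{(k)}$ vanishes and a row degenerates to a monomial. Carefully enumerating these degenerate configurations --- in particular tracking which of $h_{12},h_{21}$ may vanish while $\det\mathbf{A}$ is still nonzero --- is what pins the equivalence down to precisely the three stated conditions rather than leaving them merely sufficient.
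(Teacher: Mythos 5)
Your proposal follows essentially the same route as the paper: reduce $\mathbf{\hat{M}}$ to the $2\times2$ core via Lemma~\ref{lemma1} and then apply the two-row criterion (SFR iff the rows are linearly independent), with your determinant $\det\mathbf{A}=\beta_1\beta_2\bigl(\beta_1 h_{12}(h_{SR}^{(1)})^2-\beta_2 h_{21}(h_{SR}^{(2)})^2\bigr)$ being exactly the ratio comparison the paper carries out. You are in fact more explicit than the paper on the one delicate point: plain linear independence of the two rows does not by itself rule out dependence under nonzero relative shifts once a leading entry vanishes, which is why the stated conditions are $\beta_kh_{SR}^{(k)}\neq0$ rather than $\beta_k\neq0$; the paper disposes of this case with ``it is easy to see.'' The only item to complete in your degenerate-case bookkeeping is this: if row 1 collapsed to a monomial while row 2 still had two nonzero taps, no shift could ever make them proportional and the matrix \emph{would} be SFR, so your ``shift the two rows into alignment'' step does not follow from the vanishing of one leading entry alone. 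It needs the structural observation that $a_{22}=\beta_1\beta_2 h_{12}h_{SR}^{(1)}$ and $a_{12}=\beta_1\beta_2 h_{21}h_{SR}^{(2)}$ carry the same factors $h_{SR}^{(1)}$, $h_{SR}^{(2)}$ as the leading entries $a_{11}=\beta_1 h_{SR}^{(1)}$, $a_{21}=\beta_2 h_{SR}^{(2)}$; hence when, say, $\beta_1 h_{SR}^{(1)}=0$, either row 1 is identically zero (if $\beta_1=0$) or \emph{both} rows degenerate to single monomials (if $h_{SR}^{(1)}=0$), and only then does aligning them force the rank drop. With that observation supplied, your enumeration closes and recovers precisely the three stated conditions, matching the paper's conclusion.
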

\begin{proof}
From Lemma~\ref{lemma1}, we only need to investigate the SFR property of
\begin{equation}
\mathbf{M}=
\left[\begin{array}{cc}
\beta_1 h_{SR}^{(1)} &\beta_1 h_{21}\beta_2h_{SR}^{(2)} \\
\beta_2 h_{SR}^{(2)} &\beta_2 h_{12}\beta_1h_{SR}^{(1)}
\end{array}\right].
\end{equation}

First, we prove the necessary condition. If $\beta_1 h_{SR}^{(1)}=0$ or $\beta_2 h_{SR}^{(2)}=0$, it is easy to see that $\mathbf{M}$ is not an SFR matrix. If $\beta_1h_{12}(h_{SR}^{(1)})^2= \beta_2h_{21}(h_{SR}^{(2)})^2$, we obtain $\frac{\beta_1 h_{SR}^{(1)}}{\beta_2 h_{SR}^{(2)}}=\frac{\beta_1 h_{21}\beta_2h_{SR}^{(2)}}{\beta_2 h_{12}\beta_1h_{SR}^{(1)}}$, which means the two rows of $\mathbf{M}$ are not linearly independent. So the two-row matrix $\mathbf{M}$ is not an SFR matrix.

Second, we prove the sufficient condition. If $\beta_1h_{12}(h_{SR}^{(1)})^2\neq \beta_2h_{21}(h_{SR}^{(2)})^2$ and $\beta_1 h_{SR}^{(1)}\neq0,\beta_2 h_{SR}^{(2)}\neq0$, we can obtain that $\frac{\beta_1 h_{SR}^{(1)}}{\beta_2 h_{SR}^{(2)}}\neq \frac{\beta_1 h_{21}\beta_2h_{SR}^{(2)}}{\beta_2 h_{12}\beta_1h_{SR}^{(1)}}$, which means the two rows of $\mathbf{M}$ are linearly independent. So the two-row matrix $\mathbf{M}$ is an SFR matrix.
\end{proof}




It is proved in \cite{4155128, Shang001, Guo} that in a half-duplex relay system, the SFR property of a generator matrix leads to the asynchronous full diversity of the DLC-STC, where the coefficients (or entries) of the generator matrix are fixed constants with a fixed total power. It is, however, different, in the above full duplex system where the entries of the generator matrix in Theorem \ref{Theorem1} depend on the source to relay channel coefficients that are random. Although this is the case, due to the amplifying factor $\beta_k$ used at the relay nodes, the total power of the generator matrix coefficients is also fixed. As a result, the full asynchronous diversity is achieved from our numerous simulations due to the above SFR property.

\setcounter{equation}{42}
\begin{figure*}[!t]
\normalsize
\setcounter{MYtempeqncnt}{\value{equation}}
\begin{equation}\label{equ11-00}
t(b\phi+\varphi+l)=\beta \sum\limits_{j=1}^{b}(h_{LI}\beta)^{j-1}[h_{SR}x((b-j)\phi+l)+w_R((b-j)\phi+\varphi+l)]
\end{equation}
\begin{equation}\label{equ11v2}
\begin{array}{lll}
r(b\phi+\varphi+l)&\!\!=h_{SR}x(b\phi+l)+h_{LI}t(b\phi+\varphi+l)+w_R(b\phi+\varphi+l)\\
      &=\sum\limits_{j=1}^{b}(h_{LI}\beta)^{j-1}h_{SR}x((b-j+1)\phi+l)+(h_{LI}\beta)^{b}[h_{SR}x(l)+w_R(\varphi+l)]\\
      &+\sum\limits_{j=1}^{b}(h_{LI}\beta)^{j-1}w_R((b-j+1)\phi+\varphi+l)
\end{array}
\end{equation}
\setcounter{equation}{45}
\begin{equation}\label{equ11-21}
\begin{array}{lll}
t((b+k)\phi+\varphi+l)\!\!&=\beta [r((b+k-1)\phi+\varphi+l)-(h_{LI}\beta)^{b}\hat{x}((k-1)\phi+l)]\\
&=\beta\sum\limits_{j=1}^{b}(h_{LI}\beta)^{j-1}h_{SR}x((b+k-j)\phi+l)\\
&+ \beta\sum\limits_{j=1}^{b}(h_{LI}\beta)^{j-1}w_R((b+k-j)\phi+\varphi+l).
\end{array}
\end{equation}
\setcounter{equation}{\value{MYtempeqncnt}}
\hrulefill
\end{figure*}
\section{Coding for the case without cross-talk}\label{sec5-1}
In some scenarios, the signals from the cross-talk channels are too weak to consider, e.g., when directional antennas are used at the relay transmitter. In this case, the system model becomes what is shown in Fig.~\ref{fig_no_cross-talk} \cite{5161790, 5089955,5961159,4917875}. In this section, we will present a DLC-STC scheme for this case.

\begin{figure}[htbp]
\centering
\includegraphics[scale=0.8]{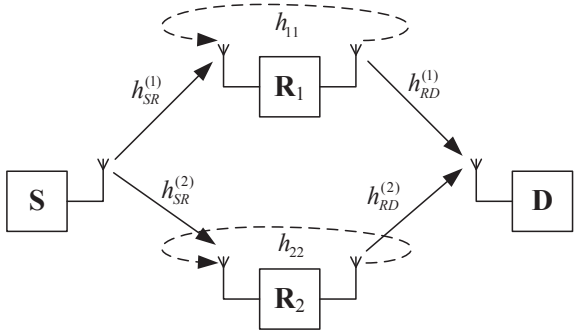}
\caption{\label{fig_no_cross-talk}Two-relay two-hop cooperative network without cross-talk.}
\end{figure}

\subsection{Construction of DLC-STC}\label{sub4a}
Because no cross-talk is considered in this case, for which the two relays are independent and one of the relays can be taken as an example to describe the coding process, we omit the relay index and denote the loop channel response as $h_{LI}$ during the following discussions for simplicity. The received signal $r(i)$ and the transmitted signal $t(i)$ at the relay node at time $i$ are as follows \cite{5089955,5961159,4917875},
\setcounter{equation}{39}
\begin{eqnarray}
r(i)&=&h_{SR}x(i-\varphi)+ h_{LI}t(i)+w_R(i)\label{equ101}\\
t(i)&=&\beta r(i-\phi)\label{equ201},
\end{eqnarray}
where, note that, $\varphi$ is the delay from the source to the relay, and $\phi$ is the common delay at the relays.

In this case, we can see that no cross-talk can be used to construct the code. However, we will present a DLC-STC scheme which is constructed by the signals from loop channels. Following the same idea as Scheme 2 in \cite{Yiliu}, where the DLC-STC is designed for the network with only one relay and the direct link, this scheme is to cancel the loop channel partially and do the coding by making use of the signal from the loop channel at the relay.

Since the loop channel information is known at the relay, the signal from the loop channel can be removed and the signal from the source node can be estimated at the relay itself at time $i$ as

\begin{equation}\label{equ701}
\hat{x}(i-\varphi)=r(i)-h_{LI}t(i)=h_{SR}x(i-\varphi)+w_R(i),
\end{equation}
which will be used in the residual interference cancellation later.

In this case, the main idea is to construct the convolutional code at each relay by the feedback of the loop interference channel. Next, we will see the process in details.

Suppose $b(\geq2)$ symbols are to be coded. In (\ref{equ201}), by letting $i=b\phi+\varphi+l, 0\leq l\leq \phi-1$ and considering $\{x(i)=0,w_R(i)=0,i\leq -1\}$, we can obtain the transmission signal at the relay at time slot $b\phi+\varphi+l$ as (\ref{equ11-00}).

Substituting (\ref{equ11-00}) into (\ref{equ101}), the received signal at the relay at time slot $b\phi+\varphi+l$ can be written as (\ref{equ11v2}).
The first term in the right hand side of the last equation in (\ref{equ11v2}) is the desired transmission signal including
$b$ symbols while the second and third terms are
interference and noise from the loop channel, respectively. Notice that from (\ref{equ701}), the second term in (\ref{equ11v2}) can be written as
$$(h_{LI}\beta)^{b}[h_{SR}x(l)+w_R(\varphi+l)]=(h_{LI}\beta)^{b}\hat{x}(l).$$
 At time slot $(b+1)\phi+\varphi+l$, since the relay has obtained
 the estimated signal $\hat{x}(l)$ as in (\ref{equ701}), the interference of the second term in (\ref{equ11v2})
can be cancelled. Then the transmission signal can be written as
\setcounter{equation}{44}
\begin{equation}\label{equ11-20}
\begin{array}{lll}
t((b+1)\phi\!+\!\varphi\!+\!l)\!\!\!\!&=\beta [r(b\phi+\varphi+l)-(h_{LI}\beta)^{b}\hat{x}(l)]\\
&=\beta\!\sum\limits_{j=1}^{b}(h_{LI}\beta)^{j-1}h_{SR}x((b\!-\!j\!+\!1)\phi\!+\!l)\\
&+\,\beta\!\sum\limits_{j=1}^{b}(h_{LI}\beta)^{j-1}w_R((b\!-\!j\!+\!1)\phi\!+\!\varphi\!+\!l).
\end{array}
\end{equation}
If the cancellation process is done continuously
in terms of $k$ in the time index $(b+k)\phi+\varphi+l$,
we can obtain a general expression as (\ref{equ11-21}).
Letting $i=(b+k)\phi+\varphi+l$, where $0\leq l\leq \phi-1$, (\ref{equ11-21}) can be simplified as
\setcounter{equation}{46}
\begin{equation}\label{equ11-2}
\begin{array}{rcl}
t(i)&=&\beta [r(i-\phi)-(h_{LI}\beta)^{b}\hat{x}(i-(b+1)\phi-\varphi)]\\
&=&\beta \sum\limits_{j=1}^{b}(h_{LI}\beta)^{j-1}h_{SR}x(i-j\phi-\varphi)\\
&&+\beta \sum\limits_{j=1}^{b}(h_{LI}\beta)^{j-1}w_R(i-j\phi).
\end{array}
\end{equation}
Let
\begin{equation}
q(i)=\left\{
\begin{array}{cc}
\beta(h_{LI}\beta)^{n-1},&i=n\phi+\varphi,  (1\leq n \leq b) \\
0,&else
\end{array}\right.,
\end{equation}
then the first
term in the right hand side of the last equation in (\ref{equ11-2}) can be written as the convolution
between $q(i)$ and $x(i)$ as follows,
\begin{equation}\label{equ13}
\beta\sum\limits_{j=1}^{b}(h_{LI}\beta)^{j-1}h_{SR}x(i-j\phi-\varphi)=h_{SR}\cdot q(i)\ast x(i),
\end{equation}
where $0<\beta<\frac{1}{|h_{LI}|}$ is the amplify parameter controlling the relay transmission power as
\begin{equation}\label{equ13-2v2}
E\left\{\sum_{i=1}^b|q(i)|^2\right\}=\sum_{i=1}^b|\beta(h_{LI}\beta)^{i-1}|^2=1.
\end{equation}

In (\ref{equ13-2v2}), $b$ determines the constraint length of the convolutional code.
To ensure full row rank of the effective coding matrix,
$b$ should be no less than the number of independent links, which is  $2$ in the current case. The sequence
$q(i)$ is determined after $\beta$ is selected to satisfy  (\ref{equ13-2v2}).

If we combine the two relay links and use the subscript indices to denote different relays,
the two signals transmitted from the two relays can be thought of as
a DLC-STC with the following generator matrix:
\begin{equation}\label{Mscheme2}
 \mathbf{M}=\left[\begin{array}{cccccc}
\beta_{1}&\!\!\mathbf{0}_{\phi-1}&\!\!\beta_{1}(h_{11}\beta_{1})&\!\!\mathbf{0}_{\phi-1}&\!\!\ldots&\!\!\beta_{1}(h_{11}\beta_{1})^{b-1}\\
 \beta_{2}&\!\!\mathbf{0}_{\phi-1}&\!\!\beta_{2}(h_{22}\beta_{2})&\!\!\mathbf{0}_{\phi-1}&\!\!\ldots&\!\!\beta_{2}(h_{22}\beta_{2})^{b-1}\end{array}\right],
\end{equation}
where $\beta_k,k=1,2$ are determined by
\begin{equation}\label{Mscheme3}
\left\{\begin{array}{c}
\sum\limits_{i=1}^b|\beta_1(h_{11}\beta_1)^{i-1}|^2=1\\
\sum\limits_{i=1}^b|\beta_2(h_{22}\beta_2)^{i-1}|^2=1\end{array}\right..
\end{equation}
The received signal at the destination is a superposition of possibly delayed versions of these two signals.

{\it Notice}: The basic idea of the scheme above is the same as Scheme 2 proposed in \cite{Yiliu}. However, Scheme 2 in \cite{Yiliu} is for the network with one relay and the direct link, where the delay between the source to the relay is counted in the total link delay and the processing delay at the relay is one symbol period. It can be regarded as one special case of the proposed scheme above by choosing $\phi=1$, $\varphi_1=\varphi_2=0$, $\beta_1=1$, and $h_{11}=0$.

\subsection{Diversity analysis}
In this section, we discuss the diversity of the proposed scheme. Similar to what was studied before, as what has been shown in \cite{4155128, Shang001, Guo}, to achieve the asynchronous full cooperative diversity, the SFR property of the generator matrix $\mathbf{M}$ in (\ref{Mscheme2}) plays the important role. From the result in \cite{Guo}, it it not hard to see that in (\ref{Mscheme2}), removing the zero columns doesn't change the SFR property of $\mathbf{M}$. So we can study the following generator matrix which is simpler:
\begin{equation}\label{Mscheme21}
 \bar{\mathbf{M}}=\left[\begin{array}{cccc}
\beta_{1}&\beta_{1}(h_{11}\beta_{1})&\ldots&\beta_{1}(h_{11}\beta_{1})^{b-1}\\
 \beta_{2}&\beta_{2}(h_{22}\beta_{2})&\ldots&\beta_{2}(h_{22}\beta_{2})^{b-1}\end{array}\right].
\end{equation}

\begin{Theorem}\label{Theorem3}
The generator matrix $\mathbf{M}$ in (\ref{Mscheme2}) is an SFR matrix iff $h_{11}\beta_1\neq h_{22}\beta_2$.
\end{Theorem}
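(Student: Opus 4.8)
The plan is to work with the reduced generator matrix $\bar{\mathbf{M}}$ in (\ref{Mscheme21}) rather than with $\mathbf{M}$ itself, since the surrounding text has already observed that deleting the zero columns preserves the SFR property. For the two-row matrix $\bar{\mathbf{M}}$ I would invoke the same criterion used in the proof of Lemma~\ref{lemma1}, namely that a two-row matrix is SFR if and only if its two rows are linearly independent. In this way the entire statement reduces to deciding when the two geometric-progression rows $\mathbf{r}_1=\beta_1(1,h_{11}\beta_1,\dots,(h_{11}\beta_1)^{b-1})$ and $\mathbf{r}_2=\beta_2(1,h_{22}\beta_2,\dots,(h_{22}\beta_2)^{b-1})$ are linearly independent. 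Before doing so I would record that the power normalization (\ref{Mscheme3}) forces $\beta_1\neq0$ and $\beta_2\neq0$, because the $i=1$ terms $|\beta_1|^2$ and $|\beta_2|^2$ already appear in the two unit-sum constraints; this nonvanishing of the leading entries is what makes the reduction to ordinary linear independence legitimate.

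For the core step I would argue by proportionality on the first two coordinates. The relation $\mathbf{r}_2=c\,\mathbf{r}_1$ forces, from the first entries, $c=\beta_2/\beta_1$ (using $\beta_1\neq0$), and then the second entries give $\beta_2(h_{22}\beta_2)=\beta_2(h_{11}\beta_1)$, i.e.\ $h_{22}\beta_2=h_{11}\beta_1$. Contrapositively, if $h_{11}\beta_1\neq h_{22}\beta_2$ then no scalar $c$ relates the rows, so they are linearly independent and $\bar{\mathbf{M}}$ is SFR; this is the sufficiency direction. For necessity, if $h_{11}\beta_1=h_{22}\beta_2=:\alpha$ then both rows are scalar multiples of $(1,\alpha,\dots,\alpha^{b-1})$, whence $\mathbf{r}_2=(\beta_2/\beta_1)\mathbf{r}_1$ are dependent and $\bar{\mathbf{M}}$ fails to be SFR. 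Here $b\geq2$ is used essentially, since the difference between $h_{11}\beta_1$ and $h_{22}\beta_2$ only becomes visible once the second column is present.

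The step I expect to require the most care is the legitimacy of passing from the SFR property to plain linear independence of the row vectors, because for an unstructured two-row matrix a nonzero shift of one row can align its support with the other and destroy full rank even when the unshifted rows are independent. Here this cannot happen: since $\beta_1,\beta_2\neq0$, each row has its first nonzero entry in column $0$, so the only relative shift under which the supports could coincide is the zero shift, which returns the rows themselves. I would therefore either cite the two-row characterization from \cite{Guo} exactly as stated in Lemma~\ref{lemma1}, or add this short support-alignment remark to justify that testing zero-shift (ordinary) linear independence already settles the SFR property for these structured generator matrices. The remaining work is the elementary proportionality computation above, so no further obstacle is anticipated.
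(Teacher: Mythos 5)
Your proposal is correct and follows essentially the same route as the paper's own proof: reduce to $\bar{\mathbf{M}}$, invoke the two-row SFR criterion from \cite{Guo} together with $\beta_1,\beta_2\neq 0$ from (\ref{Mscheme3}), and check linear independence of the two geometric rows. You merely make explicit the proportionality computation that the paper dismisses as ``easy to see,'' and add a harmless extra remark justifying the two-row criterion via the leading nonzero entries.
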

\begin{proof}
By the necessary and sufficient condition for two-row SFR matrix given in \cite{Guo}, we obtain that $\bar{\mathbf{M}}$ is an SFR matrix iff the two rows $\bar{\mathbf{m}}_k=\left[\beta_k, \beta_k(h_{kk}\beta_k), \cdots, \beta_k(h_{kk}\beta_k )^{b-1}\right], k=1,2$, are linearly independent. It is obvious that $\beta_k\neq0, k=1,2$ since the transmission power at the relays is normalized to 1 in (\ref{Mscheme3}). Then it is easy to see that $\bar{\mathbf{m}}_1$ and $\bar{\mathbf{m}}_2$ are linearly independent iff $h_{11}\beta_1\neq h_{22}\beta_2$. Finally, we can obtain that $\bar{\mathbf{M}}$ or $\mathbf{M}$ is an SFR matrix iff $h_{11}\beta_1\neq h_{22}\beta_2$.
\end{proof}

By Theorem \ref{Theorem3}, we know that if the condition
$h_{11}\beta_1\neq h_{22}\beta_2$ is satisfied,
the designed DLC-STC in (\ref{equ11-2}) for the relays with full duplex loop channels is SFR. Although $h_{11}$ and $h_{22}$ are random, the total energy of the generator sequences in $\mathbf{M}$ are normalized by (\ref{Mscheme3}) and as a matter of fact, from our numerous simulations, the full diversity can be indeed achieved. Different from the partial DLC-STC scheme proposed in Section \ref{sec3}, the effect of the source to relay channels has the same linear function as that of the relay to destination channels, for which the effects of the source to relay channels and the relay to destination channels can be put together and regarded as an equivalent channel when analyzing the diversity.

\section{Simulations}\label{sec6}
In this section, we present some simulation results to illustrate the performance of the proposed DLC-STC schemes for two-relay full-duplex
cooperative networks. In simulations, all the wireless channels are set to be quasi-static Rayleigh flat fading. Because there are just two relays, we only need to consider the relative delay between the two relays. The delays, $\varphi_1$ and $\varphi_2$, from the source to the relays are chosen to be 0 or 1 with the same probabilities, where there are only three different cases as follows:
\begin{enumerate}
\item Case 1: Relay 1 is ahead of Relay 2 ($\varphi_1=0,\varphi_2=1$);
\item Case 2: Relay 2 is ahead of Relay 1 ($\varphi_1=1,\varphi_2=0$);
\item Case 3: no delay between Relay 1 and Relay 2 ($\varphi_1=\varphi_2=0$ or $\varphi_1=\varphi_2=1$).
\end{enumerate}
The delays from the relays to the destination are uniformly distributed in $[0,\tau_{max}-1]$.

The length of each symbol block $N=20$. The maximum delay $\tau_{max}$ is 3. The zero padding length is 6 in the proposed two full duplex schemes while it is 3 in other schemes. The common processing delay at the relays is $\phi=2$. The constellation used is QPSK. We compare the performance of the proposed two schemes (marked as FD cross-talk and FD no cross-talk, respectively), and the DLC-STC scheme for half duplex cooperative communications (marked as HD) \cite{Guo}. We also simulate the proposed space-time code constructed using the loop-interference as self-coding in \cite{Yiliu} for full-duplex cooperative networks with one relay and the direct link from the source to the destination nodes for comparison (marked as Self-coding). In the Self-coding scheme, the direct link channel is also modeled as quasi-static flat Rayleigh fading channel and the direct link is regarded as a special relay whose generator polynomial is defined by the vector $[1 \;\; \mathbf{0}_{(b-1)\phi}]$. In the HD scheme and FD no cross-talk scheme, the effective symbols to be convoluted in one time slot is $b=3$. With the parameters above, the spectrum efficiency of each scheme is given in the legend of each figure. All the schemes are evaluated with MMSE-DFE receivers \cite{869048,5683902}. The signal to noise ratios (SNRs) at the receivers of the relay and the destination are denoted as $S\!N\!R_R$ and $S\!N\!R_D$, respectively. Since the average power gain of each wireless Rayleigh flat fading channel is normalized to be 1, we have $S\!N\!R_R=\frac{E_s}{\sigma^2_R}$ and $S\!N\!R_D=\frac{E_s}{\sigma^2_D}$.

{\it Simulation 1-BER vs. $S\!N\!R_D$}:
In this simulation, we compare the BER performance vs. $S\!N\!R_D$ when $S\!N\!R_R$ remains constant as 30dB, which is shown in Fig.~\ref{fig4}. We can see that the BER performance of Self-coding scheme is the best. This is reasonable since the direct link is available in this scheme and there is no source to relay fading or relay noise in the direct link. As for the schemes for FD mode, FD cross-talk and FD no cross-talk almost have the same BER performance. The difference of the BER performance between the two FD schemes and HD scheme is not very large, but the two FD schemes can achieve a much higher throughput than HD scheme.
\begin{figure}[htbp]
\centering
\includegraphics[scale=0.7]{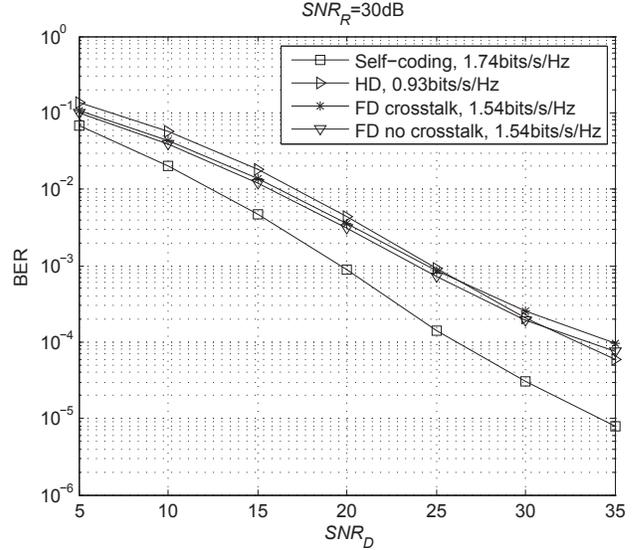}
\caption{\label{fig4} BER versus $S\!N\!R_D$ of two-relay full-duplex network with AF protocol when $S\!N\!R_R$ remains constant as 30dB.}
\end{figure}

{\it Simulation 2-BER vs. $S\!N\!R_R$}:
In this simulation, the BER performance vs. $S\!N\!R_R$ is compared when $S\!N\!R_D$ remains constant as 30dB, which is shown in Fig.~\ref{fig5}. We also notice that the BER performance of Self-coding scheme is the best. As for the two schemes for the FD mode, the BER performance is almost the same. The BER performance of HD scheme is a little better than those of the two FD schemes but its throughput is much smaller. From both the figures, we can see that $S\!N\!R_D$ has much more effect on HD scheme than the two FD schemes.
\begin{figure}[htbp]
\centering
\includegraphics[scale=0.7]{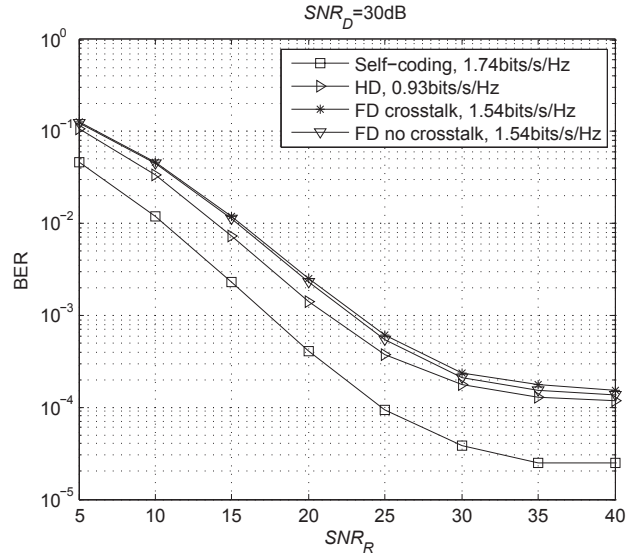}
\caption{\label{fig5} BER versus $S\!N\!R_R$ of two-relay full-duplex network with AF protocol when $S\!N\!R_D$ remains constant as 30dB.}
\end{figure}

{\it Simulation 3-Diversity comparison}: This simulation is to compare the achievable diversity of different schemes. In this simulation, we set $S\!N\!R_R=S\!N\!R_D=\gamma$. The receivers for all the schemes are MMSE-DFE receivers. The direct transmission scheme (marked as Direct transmission), in which the source node sends the signal to the destination node directly without any relay, is also simulated for comparison. To be fair, the transmission power for direct transmission scheme is doubled, so the SNR is $\frac{2Es}{\sigma_D^2}=2{\gamma}$. We can see that except the Direct transmission scheme which has only one transmit antenna, all other schemes who have two equivalent transmit antennas can achieve the full diversity, that is, diversity order of two, when $\gamma$ goes to infinity.
\begin{figure}[htbp]
\centering
\includegraphics[scale=0.7]{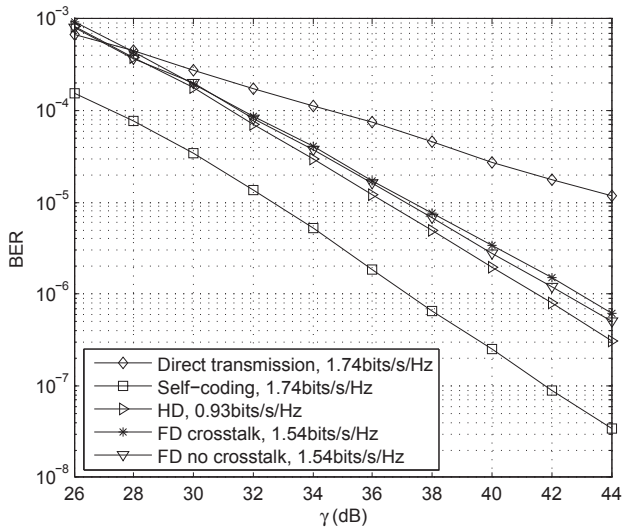}
\caption{\label{fig6} Diversity comparison of two-relay full-duplex network with AF protocol when SNR is high.}
\end{figure}

{\it Simulation 4-Diversity with direct link}: Although the direct link is not considered during the analysis, it is straightforward to include the direct link by just adding one more row for the generator matrix representing ``a special relay" whose generator polynomial is $[1\,\,\, 0\,\,\, 0\,\, \cdots\,\, 0]$. Moreover, the 3 by 1 MIMO with the linear Toeplitz space-time code \cite{1523684} is also simulated for comparison. To be fair, we set the 3 by 1 MIMO to have the same received SNR with the proposed schemes, that is, if $S\!N\!R_R=S\!N\!R_D=\gamma$ dB, the received SNR for 3 by 1 MIMO is $\gamma+10log_{10}3$ dB. The Direct transmission without any relay scheme, which has the diversity order of one, and the Self-coding scheme with one relay and the direct link, which has been shown in \cite{Yiliu} to achieve the diversity order of two, are also included in this simulation for comparison. The results are shown in Fig.~\ref{fig7}, from which we can see that the BERs of the two proposed schemes are a bit larger than the HD scheme and the 3 by 1 MIMO is the best. However, the proposed schemes have better spectrum efficiency than the HD scheme and they can achieve the same diversity as the 3 by 1 MIMO, that is, they all achieve the diversity order of three in this case.
\begin{figure}[htbp]
\centering
\includegraphics[scale=0.55]{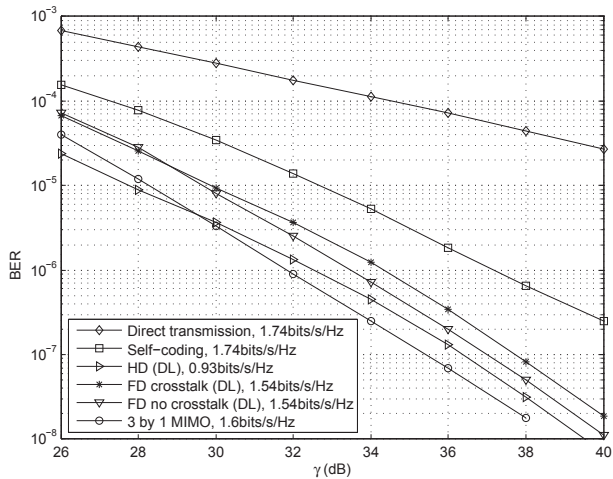}
\caption{\label{fig7} Diversity comparison with the direct link (DL) included.}
\end{figure}

\setcounter{equation}{54}
\begin{figure*}[!t]
\normalsize
\setcounter{MYtempeqncnt}{\value{equation}}
\begin{equation}\label{equ-ap2}
\begin{array}{rcl}
\mathbf{\hat{C}}&=&\left[\begin{array}{cc}&[\mathbf{\tilde{a}}_1*\mathbf{x}+K\mathbf{\tilde{a}}_1(\xi+2)*\mathbf{x}+K^2\mathbf{\tilde{a}}_1(2(\xi+2))*\mathbf{x}+\cdots]_{N+p}\\
&[\mathbf{\tilde{a}}_2*\mathbf{x}+K\mathbf{\tilde{a}}_2(\xi+2)*\mathbf{x}+K^2\mathbf{\tilde{a}}_2(2(\xi+2))*\mathbf{x}+\cdots]_{N+p}\end{array}\right]\\
&=&\left[\left\{\left[\begin{array}{cc}\mathbf{\tilde{a}}_1\\ \mathbf{\tilde{a}}_2\end{array}\right]+\left[\begin{array}{cc} K\mathbf{\tilde{a}}_1(\xi+2)\\K\mathbf{\tilde{a}}_2(\xi+2)\end{array}\right]+\left[\begin{array}{cc}K^2\mathbf{\tilde{a}}_1(2(\xi+2))\\K^2\mathbf{\tilde{a}}_2(2(\xi+2))\end{array}\right]+\cdots\right\}*\mathbf{x}
\right]_{N+p}
\end{array}
\end{equation}
\begin{equation}\label{equ-ap3}
\begin{array}{rcl}\mathbf {B}(\mathbf{\hat{C}}_1,\mathbf{\hat{C}}_2)&=&\left[\left\{\left[\begin{array}{cc}\mathbf{\tilde{a}}_1\\\mathbf{\tilde{a}}_2\end{array}\right]+\left[\begin{array}{cc} K\mathbf{\tilde{a}}_1(\xi+2)\\K\mathbf{\tilde{a}}_2(\xi+2)\end{array}\right]+\left[\begin{array}{cc}K^2\mathbf{\tilde{a}}_1(2(\xi+2))\\K^2\mathbf{\tilde{a}}_2(2(\xi+2))\end{array}\right]+\cdots\right\}*\mathbf{e}
\right]_{N+p}\end{array}
\end{equation}
\setcounter{equation}{\value{MYtempeqncnt}}
\hrulefill
\end{figure*}

Note that, as mentioned earlier, although our proposed  partial DLC-STC and DLC-STC have generator coefficient matrices SFR, we are not able to theoretically prove that they achieve the full diversity as what is done in \cite{Guo}.  However, from our simulations above, one can clearly see that they indeed achieve the full diversity with the MMSE-DFE receiver numerically.

\section{Conclusions}\label{sec7}
If AF protocol is adopted in two-relay asynchronous full-duplex cooperative communication networks with cross-talks, we showed that the cross-talk interference cannot be removed well. We then first proposed a partial DLC-STC scheme to make use of the cross-talks instead of removing them. For the case of two-relay asynchronous full-duplex cooperative networks without cross-talks between the relays, we also proposed a DLC-STC scheme by making use of signal from the loop channels. We showed that by controlling the amplifying factors, both schemes can achieve full asynchronous cooperative diversity when suboptimal receivers such as MMSE-DFE receivers are used. The proposed schemes can also be extended to the case where the direct link is available, by which one more diversity order can be achieved.

\begin{center}
{\bf Acknowledgement}
\end{center}
The authors would like to thank the editor and the anonymous reviewers for their careful reading of this manuscript and for their many detailed, constructive, and
useful comments and suggestions that have improved the presentation of this
paper.

\appendices
\section{Proof for Lemma~\ref{lemma1-0}}
Denoting ${\mathbf{\tilde{a}}}_k=\left[{\mathbf{\bar{a}}}_k\,\, 0\,\, 0\cdots\right]$ and its shifted version ${\mathbf{\tilde{a}}}_k(n)=\left[\mathbf{0}_{1\times n}\,\,{\mathbf{\bar{a}}}_k\,\, 0\,\cdots\right]$, where $\bar{\mathbf{a}}_k=\left[a_{k1} \,\,\,\, \mathbf{0}_{\psi_k} \,\,\,\, a_{k2} \,\,\,\, \mathbf{0}_{\xi-\psi_k}\right]$ and $k=1,2$, we have $\mathbf{\hat{a}}_k=\mathbf{\tilde{a}}_k+K\mathbf{\tilde{a}}_k(\xi+2)+K^2\mathbf{\tilde{a}}_k(2(\xi+2))+\cdots$. Because the convolution is a linear operation, we obtain
\setcounter{equation}{53}
\begin{equation}\label{equ-ap1}
[\mathbf{\hat{a}}_k*\mathbf{x}]_{N+p}\!\!=\!\![\mathbf{\tilde{a}}_k*\mathbf{x}\!+\!K\mathbf{\tilde{a}}_k(\xi+2)*\mathbf{x}\!+\!K^2\mathbf{\tilde{a}}_k(2(\xi+2))*\mathbf{x}\!+\cdots]_{N+p}
\end{equation}
The codeword of the partial DLC-STC can be written as (\ref{equ-ap2}) at the top of the next page,
where $\left[\begin{array}{cc}\mathbf{{v}}_1\\\mathbf{{v}}_2\end{array}\right]*\mathbf{x}=\left[\begin{array}{cc}\mathbf{{v}}_1*\mathbf{x}\\\mathbf{{v}}_2*\mathbf{x}\end{array}\right]$.

For any pair of distinct codewords $\mathbf{\hat{C}}_1$ and $\mathbf{\hat{C}}_2$, which are generated from two different signal blocks $\mathbf{{x}}_1$ and $\mathbf{{x}}_2$, respectively, we define the matrix $\mathbf {B}(\mathbf{\hat{C}}_1,\mathbf{\hat{C}}_2)=\mathbf{\hat{C}}_1-\mathbf{\hat{C}}_2$. It can be written as (\ref{equ-ap3}),
where $\mathbf{e}=\mathbf{x}_1-\mathbf{x}_2\neq\mathbf{0}$.

By the rank criterion of Rayleigh space-time codes \cite{tarokh98}, we know that the diversity of the code depends on the minimum rank of $\mathbf {B}(\mathbf{\hat{C}}_1,\mathbf{\hat{C}}_2)$. In (\ref{equ-ap3}), it is not hard to see that $\mathbf {B}(\mathbf{\hat{C}}_1,\mathbf{\hat{C}}_2)$ is a result of column linear transform on the matrix $\mathbf {B}(\mathbf{\tilde{C}}_1,\mathbf{\tilde{C}}_2)=\left[\left[\begin{array}{cc}\mathbf{\tilde{a}}_1\\\mathbf{\tilde{a}}_2\end{array}\right]*\mathbf{e}\right]_{N+p}$, for which $\mathbf {B}(\mathbf{\hat{C}}_1,\mathbf{\hat{C}}_2)$ and $\mathbf {B}(\mathbf{\tilde{C}}_1,\mathbf{\tilde{C}}_2)$ have the same rank.

 Next, we will show $\mathbf {B}(\mathbf{\tilde{C}}_1,\mathbf{\tilde{C}}_2)=\left[\left[\begin{array}{cc}\mathbf{\tilde{a}}_1\\\mathbf{\tilde{a}}_2\end{array}\right]*\mathbf{e}\right]_{N+p}$ and $\mathbf {B}(\mathbf{\bar{C}}_1,\mathbf{\bar{C}}_2)=\left[\begin{array}{cc}\mathbf{\bar{a}}_1\\\mathbf{\bar{a}}_2\end{array}\right]*\mathbf{e}$ have the same rank iff $p\geq \xi+1$.

Recalling ${\mathbf{\tilde{a}}}_k=\left[{\mathbf{\bar{a}}}_k\,\, 0\,\, 0\cdots\right]$, where the length of vector ${\mathbf{\bar{a}}}_k$ is $\xi+2$, we obtain
 \setcounter{equation}{56}
 \begin{equation}
 \mathbf {B}(\mathbf{\bar{C}}_1,\mathbf{\bar{C}}_2)=\left[\begin{array}{cc}\mathbf{\bar{a}}_1\\\mathbf{\bar{a}}_2\end{array}\right]*\mathbf{e}=\left[\begin{array}{cc}\mathbf{\bar{a}}_1*\mathbf{e}\\\mathbf{\bar{a}}_2*\mathbf{e}\end{array}\right]
 \end{equation}

 \begin{equation}
 \mathbf {B}(\mathbf{\tilde{C}}_1,\mathbf{\tilde{C}}_2)=\left[\left[\begin{array}{cc}\mathbf{\tilde{a}}_1\\\mathbf{\tilde{a}}_2\end{array}\right]*\mathbf{e}\right]_{N+p}\!\!\!\!\!=\!\left[\begin{array}{cc}\left[{\mathbf{\bar{a}}}_1\,\, 0\,\, 0\cdots\right]*\mathbf{e}\\\left[{\mathbf{\bar{a}}}_2\,\, 0 \,\, 0 \cdots\right]*\mathbf{e}\end{array}\right]_{N+p}
 \end{equation}
Since the number of columns in $\mathbf {B}(\mathbf{\bar{C}}_1,\mathbf{\bar{C}}_2)$ is $N+\xi+1$, it is obvious that $R(\mathbf{B}(\mathbf{\tilde{C}}_1,\mathbf{\tilde{C}}_2))=R(\mathbf {B}(\mathbf{\bar{C}}_1,\mathbf{\bar{C}}_2))$ if the number of columns in $\mathbf {B}(\mathbf{\tilde{C}}_1,\mathbf{\tilde{C}}_2)$ is no less than that in $\mathbf {B}(\mathbf{\bar{C}}_1,\mathbf{\bar{C}}_2)$, i.e., $p\geq \xi+1$, where $R(\mathbf{B})$ stands for the rank of matrix $\mathbf{B}$.

 While, if $p< \xi+1$, which means $\mathbf {B}(\mathbf{\tilde{C}}_1,\mathbf{\tilde{C}}_2)$ is the submatrix formed by the first $N+p$ columns in $\mathbf {B}(\mathbf{\bar{C}}_1,\mathbf{\bar{C}}_2)$, it cannot be ensured that the two matrices have the same rank since $\mathbf{\bar{a}}_1$, $\mathbf{\bar{a}}_2$, and $\mathbf{e}$ are all random.

From the discussion above, we know that iff $p\geq \xi+1$, $R(\mathbf{B}(\mathbf{\tilde{C}}_1,\mathbf{\tilde{C}}_2))=R(\mathbf {B}(\mathbf{\bar{C}}_1,\mathbf{\bar{C}}_2))$ for any two different source frames $\mathbf{x}_1$ and $\mathbf{x}_2$. Considering $R(\mathbf{B}(\mathbf{\hat{C}}_1,\mathbf{\hat{C}}_2))=R(\mathbf {B}(\mathbf{\tilde{C}}_1,\mathbf{\tilde{C}}_2))$, we achieve $R(\mathbf{B}(\mathbf{\hat{C}}_1,\mathbf{\hat{C}}_2))=R(\mathbf {B}(\mathbf{\bar{C}}_1,\mathbf{\bar{C}}_2))$ for any two different source frames $\mathbf{x}_1$ and $\mathbf{x}_2$ iff $p\geq \xi+1$.

So the partial DLC-STC $\mathbf{\hat{C}}$ achieves the same asynchronous diversity as the DLC-STC $\mathbf{\bar{C}}$ iff $p\geq \xi+1$.

\end{document}